\lstdefinelanguage{its}
{morekeywords={if, =, +, >=, +, -, (, ), [, ], true, false, typedef, transition, int,GAL, abort, !, \{, \}, label, ", &&, ., composite, synchronization,for,gal, array},
morecomment=[l]{//},
sensitive=false,
}
\newcommand{\nat}%
{\ensuremath{\mathds{N}}}
\newcommand{\zrel}%
{\ensuremath{\mathds{Z}}}
\newcommand{\rea}%
{\ensuremath{\mathds{R}}}
\newcommand{\bool}%
{\ensuremath{\mathds{B}}}
\newcommand{\pre}%
{\ensuremath{\preccurlyeq}}
\newcounter{rrule}
\renewcommand{\implies}{\Rightarrow}
\newcommand{\lang}{\ensuremath{\mathscr{L}}}
\newcommand{\B}{\ensuremath{\mathds{B}}}
\newcommand{\AP}{\ensuremath{\mathit{AP}}}
\newcommand{\Acc}{\ensuremath{\mathit{Acc}}}
\newcommand{\Trim}{\ensuremath{\mathit{Trim}}}
\def\init{\iota}
\DeclareMathOperator{\SG}{\mathsf{SG}}
\DeclareMathOperator{\TG}{\mathsf{TG}}
\newcommand{\F}{\mathsf{F}} 
\newcommand{\G}{\mathsf{G}} 
\newcommand{\X}{\mathsf{X}} 
\newcommand{\givenmin}[2]{\ensuremath{{\min}_{|#2}(#1)}}
\newcommand{\givenmax}[2]{\ensuremath{{\max}_{|#2}(#1)}}
\newcommand{\givenminqe}[2]{\ensuremath{{\min}^{\exists}_{|#2}(#1)}}
\newcommand{\givenmaxqe}[2]{\ensuremath{{\max}^{\exists}_{|#2}(#1)}}
\newcommand{\givenminato}[2]{\ensuremath{\mathit{BM}_{|#2}(#1)}}
\newcommand{\givensirelax}[2]{\ensuremath{\mathit{sirelax}_{|#2}(#1)}}
\newcommand{\givensirestrict}[2]{\ensuremath{\mathit{sirestrict}_{|#2}(#1)}}
\newcommand{\si}{\mathit{si}}
\renewcommand{\ss}{\mathit{ss}}
\definecolor{lime}{HTML}{A6CE39}
\DeclareRobustCommand{\orcidicon}{
	\hspace{-2.5mm}
	\begin{tikzpicture}[baseline={(0,-0.12)}]
	\draw[lime, fill=lime] (0,0)
	circle [radius=0.16]
	node[white] (ID) {{\fontfamily{qag}\selectfont \tiny ID}};
	\draw[white, fill=white] (-0.0625,0.095)
	circle [radius=0.007];
	\end{tikzpicture}
	\hspace{-2.5mm}
      }
\def\orcidID#1{\href{https://orcid.org/#1}{\smash{\orcidicon}}}
\begin{document}
\title{Simplifying LTL Model Checking\\ Given Prior Knowledge}



%
%

\author{Alexandre Duret-Lutz\inst{3}\orcidID{0000-0002-6623-2512}, Denis Poitrenaud\inst{1,2}\orcidID{0009-0007-5038-7804}, Yann Thierry-Mieg\Letter\inst{1}\orcidID{0000-0001-7775-1978}}

%

\authorrunning{A.~Duret-Lutz, D.~Poitrenaud, Y.~Thierry-Mieg}

%
\institute{Sorbonne Université, CNRS, LIP6, F-75005 Paris, France \\
\email{denis.poitrenaud@lip6.fr, yann.thierry-mieg@lip6.fr}
\and Université Paris Cité, F-75006 Paris, France \\
 \and EPITA, LRE, Le Kremlin-Bicêtre, France\\
\email{adl@lrde.epita.fr}
}

\maketitle              
\begin{abstract}
  We consider the problem of the verification of an LTL specification
  $\varphi$ on a system $S$ given some prior knowledge $K$, an LTL
  formula that $S$ is known to satisfy.  The automata-theoretic
  approach to LTL model checking is implemented as an emptiness check
  of the product $S\otimes A_{\lnot\varphi}$ where $A_{\lnot\varphi}$
  is an automaton for the negation of the property.  We propose new
  operations that simplify an automaton $A_{\lnot\varphi}$
  \emph{given} some knowledge automaton $A_K$, to produce an automaton
  $B$ that can be used instead of $A_{\lnot\varphi}$ for more
  efficient model checking.

  Our evaluation of these operations on a large benchmark derived from
  the MCC'22 competition shows that even with simple knowledge, half
  of the problems can be definitely answered without running an LTL
  model checker, and the remaining problems can be simplified
  significantly.
\end{abstract}

\section{Introduction --- Knowledge is Power}

LTL model checking consists in verifying whether all infinite
executions of a system $S$ satisfy an LTL formula $\varphi$, \emph{i.e.},
$\lang(S)\subseteq\lang(\varphi)$.  In this case we write
$S\models \varphi$.  In the automata-theoretic approach to model
checking~\cite{Vardi07}, this inclusion test is usually implemented as
an emptiness check of the product of two automata:
$\lang(S\otimes A_{\lnot\varphi})=\emptyset$, where $A_{\lnot\varphi}$
represents the negation of $\varphi$.

The premise of this paper is that we assume to have some additional
knowledge $K$ about $S$.  In particular, the knowledge we consider are
over-approximations of the system: $\lang(S)\subseteq\lang(K)$.  For
instance $K$ might be an LTL formula that has already been proven on
$S$.
Of course if $K$ implies $\varphi$, \emph{i.e.} $\lang(K) \subseteq \lang(\varphi)$, then $\varphi$
 holds as well since $\lang(S) \subseteq \lang(K)$.
 And if $\lang(K) \subseteq \lang(\lnot\varphi)$,
 any run of the system is a counter-example.

 But if none of these basic implications hold, we can still benefit
 from prior knowledge.  We show that verifying $S\models \varphi$
 \emph{given $K$} is equivalent to checking
 $\lang(S\otimes B)=\emptyset$ for an automaton $B$ that is
 \emph{simpler} than $A_{\lnot\varphi}$, hopefully allowing a faster
 exploration of $S\otimes B$.

 As an example the automaton $A_{\lnot\varphi}$ that is on the left of
 Figure~\ref{fig:minato} (page~\pageref{fig:minato}) can be replaced
 by the automaton $B$ that is on the right of the same figure.  This
 new automaton is smaller, uses fewer atomic propositions, is now
 deterministic, and needs fewer acceptance sets because it is now a
 terminal automaton~\cite{cerna.03.mfcs,manna.90.podc}.  Using this
 automaton $B$ should therefore simplify the job of a model checker.

This paper is organized as follows.  In Section~\ref{sec:lang} we
formalize notion of $S\models \varphi$ \emph{given $K$} from the point
of view of languages, and discuss possible goals when transposing this
on automata.  In Section~\ref{sec:aut} we pose useful definitions, then
Section~\ref{sec:basic} proposes basic and Section~\ref{sec:bounds} advanced automata operations
that aim to simplify $A_{\lnot\varphi}$ based on some given knowledge
$K$.  In Section~\ref{sec:stuttu} we propose costlier automata operations
that aim to modify $A_{\lnot\varphi}$ to make it
stutter-insensitive, within the bounds allowed by some knowledge $K$.
Finally, in Section~\ref{sec:bench} we evaluate the
above techniques on a large third-party benchmark provided by the model checking contest~\cite{kordon.21.sttt}.



\section{Bounding Languages ``Given That...''}\label{sec:lang}

In this section, we focus on providing justification for our approach
at the \emph{language} level.  The language {\lang(X)} of a system or
property $X$ is a set of infinite words over an alphabet $\Sigma$,
$\lang(X) \subseteq \Sigma^\omega$.  We denote
$\overline{\lang(X)}=\Sigma^\omega\setminus \lang(X)$ the complement
of the language of $X$.

A system $S$ satisfies property $\varphi$, denoted $S\models \varphi$
if and only if the language $\lang(S)$ of the system is a subset of
the property language $\lang(\varphi)$, \emph{i.e.},
$\lang(S) \subseteq \lang(\varphi)$.  When $\varphi$ is an LTL
formula, the classical automaton-based approach~\cite{Vardi07} is to
test $\lang(S) \cap \lang{(\lnot\varphi)} = \emptyset$, \emph{i.e.},
perform an emptiness check with the language of the negated property.

In the following, we assume that $\lang(S)\ne\emptyset$ since the
empty system would satisfy any property and its negation.

Now, consider a property $K$ (a \emph{knowledge}) such that it has
already been established that $S\models K$, \emph{i.e.}, we know that
$\lang(S) \subseteq \lang(K)$.  This \emph{a priori} knowledge gives
us some degrees of freedom when testing whether $S$ satisfies a new
property $\varphi$. Indeed, we already know that words outside
$\lang(K)$ are definitely \emph{not} part of $\lang(S)$.

The main intuition is given by Fig.~\ref{fig:patate}.  Since
$\lang(S)\subseteq\lang(K)$, it is safe to replace the test
$\lang(S) \cap \lang{(\lnot\varphi)} = \emptyset$ by a test
$\lang(S) \cap \lang(B) = \emptyset$ where $\lang(B)$ is built from
$\lang{(\lnot\varphi)}$ by either removing or including words of
$\overline{\lang(K)}$.  Indeed, words in $\overline{\lang(K)}$ are not
part of the system, so they cannot belong to
$\lang(S) \cap \lang(\lnot \varphi)$.

\begin{figure}[tb]
    \centering
    \begin{subfigure}{0.32\textwidth}
        \centering
  \begin{tikzpicture}
    \draw[draw=magenta] (0,1) ellipse[x radius=1cm,y radius=2cm];
    \node at (0,1.9) {$\lang(\lnot\varphi)$};
    \begin{scope}
      \clip (1,0.2) ellipse[x radius=1.8cm,y radius=1.1cm];
      \fill[fill=magenta!30,draw=magenta] (0,1) ellipse[x radius=1cm,y radius=2cm];
    \end{scope}
    \draw[draw=cyan] (1,0.2) ellipse[x radius=1.8cm,y radius=1.1cm];
    \draw[draw=gray,dashed] (1,-0.3) ellipse[x radius=.8cm,y radius=.5cm] node[xshift=2mm]{$\lang(S)$};
    \node at (2,0.2) {$\lang(K)$};
    \node at (0,0.2) {$\lang(B)$};
    \draw (current bounding box.north west) -- (current bounding box.north east)
    -- (current bounding box.south east) -- (current bounding box.south west)
    -- cycle;
  \end{tikzpicture}
          \caption{The most restricted $\lang(B)$ that can be constructed from $K$, $\lang(B)=\lang(\lnot\varphi) \cap \lang(K)$}
        \label{fig:restrict}
    \end{subfigure}
    \hfill
    \begin{subfigure}{0.32\textwidth}
        \centering
   \begin{tikzpicture}
    \fill[fill=magenta!30,draw=magenta] (0,1) ellipse[x radius=1cm,y radius=2cm];
    \node[align=center] at (0,1.9) {$\lang(B)$\\=\\$\lang(\lnot\varphi)$};
    \draw[draw=cyan] (1,0.2) ellipse[x radius=1.8cm,y radius=1.1cm];
    \node at (2,0.2) {$\lang(K)$};
    \draw[draw=gray,dashed] (1,-0.3) ellipse[x radius=.8cm,y radius=.5cm] node[xshift=2mm]{$\lang(S)$};
    \draw (current bounding box.north west) -- (current bounding box.north east)
    -- (current bounding box.south east) -- (current bounding box.south west)
    -- cycle;
  \end{tikzpicture}
         \caption{Classic approach simply using the language of $\lnot\varphi$, $\lang(B)=\lang(\lnot\varphi)$.}
        \label{fig:product}
    \end{subfigure}
    \hfill
    \begin{subfigure}{0.32\textwidth}
        \centering
   \begin{tikzpicture}
    \fill[fill=magenta!30,draw=magenta] (0,1) ellipse[x radius=1cm,y radius=2cm];
    \draw[draw=cyan] (1,0.2) ellipse[x radius=1.8cm,y radius=1.1cm];
    \fill[fill=magenta!30] (current bounding box.north west) -- (current bounding box.north east)
    -- (current bounding box.south east) -- (current bounding box.south west)
    -- cycle;
    \draw[draw=cyan,fill=white] (1,0.2) ellipse[x radius=1.8cm,y radius=1.1cm];
    \fill[fill=magenta!30,draw=magenta] (0,1) ellipse[x radius=1cm,y radius=2cm];
    \draw[draw=cyan] (1,0.2) ellipse[x radius=1.8cm,y radius=1.1cm];
    \node at (0,1.9) {$\lang(\lnot\varphi)$};
    \node at (2,0.2) {$\lang(K)$};
    \node at (2,1.9) {$\lang(B)$};
    \draw[draw=gray,dashed] (1,-0.3) ellipse[x radius=.8cm,y radius=.5cm] node[xshift=2mm]{$\lang(S)$};
    \draw (current bounding box.north west) -- (current bounding box.north east)
    -- (current bounding box.south east) -- (current bounding box.south west)
    -- cycle;
  \end{tikzpicture}
        \caption{The most relaxed $\lang(B)$ that can be constructed given $K$, $\lang(B)= \lang(\lnot\varphi) \cup \overline{\lang(K)}$.}
        \label{fig:relax}
    \end{subfigure}

    \caption{The outside box represents all words in $\Sigma^\omega$. Each language is depicted as an ellipse, with the language of the system $\lang(S)$ inside the knowledge $\lang(K)$ but we do not know
    whether the system language overlaps the negated property language $\lang(\lnot\varphi)$. The language $\lang(B)$ represented in magenta can be chosen anywhere between these extremes to replace $\lang(\lnot\varphi)$ in the model-checking procedure.}
    \label{fig:patate}
\end{figure}
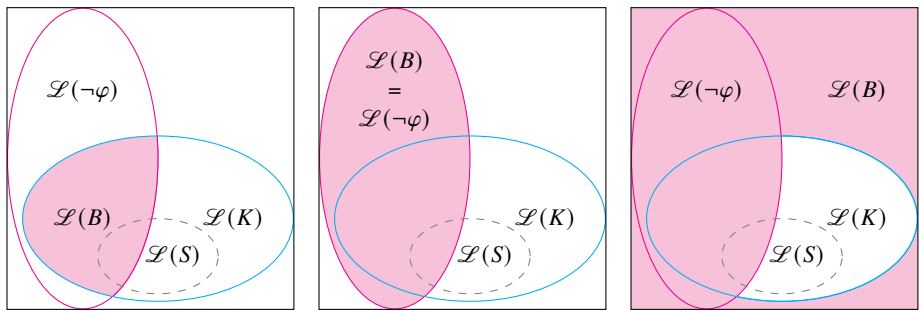

This leads to the following theorem whose proof follows immediately
from Figure~\ref{fig:patate}.

\begin{theorem}\label{th:rel}
  \label{knowledge-theorem}
  Let $S$ be a system, and $K$ a property such that
  $\lang(S) \subseteq \lang(K)$.  For any property $\varphi$, we can
  define a $\lang(B)$ such that
  $\lang(S) \cap \lang(\lnot\varphi) = \emptyset$ if and only if
  $\lang(S) \cap \lang(B) = \emptyset$, by choosing $\lang(B)$
  between the following bounds:

  \[
    \lang(\lnot\varphi) \cap \lang(K) \quad\subseteq\quad \lang(B) \quad\subseteq\quad \lang(\lnot \varphi)\cup \overline{\lang(K)}
    \]
\end{theorem}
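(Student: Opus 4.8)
The plan is to prove something slightly stronger than the stated equivalence, namely that \emph{every} language $\lang(B)$ lying between the two bounds satisfies $\lang(S) \cap \lang(B) = \lang(S) \cap \lang(\lnot\varphi)$; the ``if and only if'' on emptiness is then immediate, and this stronger equality is in fact what the automata constructions of the later sections will rely on when they replace $A_{\lnot\varphi}$ by a concrete $B$.

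The single observation that powers the argument is that $\lang(S) \subseteq \lang(K)$ gives us both $\lang(S) \cap \lang(K) = \lang(S)$ and $\lang(S) \cap \overline{\lang(K)} = \emptyset$. I would then intersect the given chain $\lang(\lnot\varphi) \cap \lang(K) \subseteq \lang(B) \subseteq \lang(\lnot\varphi) \cup \overline{\lang(K)}$ with $\lang(S)$; since intersection is monotone this preserves both inclusions, yielding $\lang(S) \cap \lang(\lnot\varphi) \cap \lang(K) \subseteq \lang(S) \cap \lang(B) \subseteq \lang(S) \cap \bigl(\lang(\lnot\varphi) \cup \overline{\lang(K)}\bigr)$.

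Finally I would simplify the two ends of this squeeze. The left-hand side equals $\lang(S) \cap \lang(\lnot\varphi)$ because $\lang(S) \cap \lang(K) = \lang(S)$. For the right-hand side, distributing the intersection over the union gives $\bigl(\lang(S) \cap \lang(\lnot\varphi)\bigr) \cup \bigl(\lang(S) \cap \overline{\lang(K)}\bigr) = \bigl(\lang(S) \cap \lang(\lnot\varphi)\bigr) \cup \emptyset = \lang(S) \cap \lang(\lnot\varphi)$. Both ends of the squeeze therefore coincide, forcing $\lang(S) \cap \lang(B) = \lang(S) \cap \lang(\lnot\varphi)$, and in particular one is empty exactly when the other is.

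There is no genuine obstacle here: the statement is just the Venn diagram of Figure~\ref{fig:patate} made formal. The only points worth care are that the claim is established for an \emph{arbitrary} $\lang(B)$ between the bounds rather than for one fixed choice, and that it is the equality of the two intersections — not merely the equivalence of their emptiness — that should be highlighted, since it is what makes the substitution of $B$ for $A_{\lnot\varphi}$ sound in the sections that follow.
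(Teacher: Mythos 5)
Your proof is correct and follows essentially the same route as the paper, which simply asserts that the result "follows immediately from Figure~\ref{fig:patate}", i.e., from the observation that words of $\overline{\lang(K)}$ cannot belong to $\lang(S)$. Your squeeze argument (intersecting the chain of inclusions with $\lang(S)$ and simplifying both ends) is just the formal write-up of that same Venn-diagram reasoning, and the stronger equality $\lang(S)\cap\lang(B)=\lang(S)\cap\lang(\lnot\varphi)$ you establish is indeed what the later constructions rely on.
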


The lower bound $\lang(\lnot\varphi) \cap \lang(K)$ is called the
\emph{restriction} of $\lnot\varphi$: it is constructed from
$\lang(\lnot\varphi)$ by removing words from $\overline{\lang(K)}$
(Fig.~\ref{fig:restrict}).  The upper bound
$\lang(\lnot \varphi) \cup \overline{\lang(K)}$ is the
\emph{relaxation} of $\lnot\varphi$, constructed from
$\lang(\lnot\varphi)$ by adding words from $\overline{\lang(K)}$
(Fig.~\ref{fig:relax}).

%

The above theorem gives us more freedom in the automata-theoretic
approach to LTL model checking.  In this context, both the property
$\varphi$ and the knowledge $K$ are expressed as linear-time temporal
logic (LTL) formulas which can be converted into automata over
infinite words.

Thus, model checking $S\models\varphi$ is implemented as
$\lang(S\otimes A_{\lnot\varphi}) = \emptyset$ where
$A_{\lnot\varphi}$ is an automaton for $\lnot \varphi$, and $\otimes$
is the product of automata~\cite{Vardi07}.  Here, we want to use
Theorem~\ref{th:rel} to find a \emph{simpler} automaton $B$ such that
model checking with $\lang(S\otimes B) = \emptyset$ is more efficient.
Contrary to intuition, choosing the automaton $B$ with the smallest
\emph{language} might be counter-productive, because a small language
does not necessarily equate to a small automaton.

\label{sec:goals}

To make model checking more efficient, we target the
following goals:
\begin{description}
\item[smaller or more deterministic] Reducing the size of $B$, or
  making it more deterministic can often reduce the size of the product
  $S\otimes B$. (Blahoudek et al.~\cite{blahoudek.14.spin} suggest that
  contrary to previous measurements~\cite{sebastiani.03.charme}, ``smaller'' is
  more important than ``more deterministic'' for model checking.)
\item[simpler strength class] The emptiness check algorithms can be
  simplified if $B$ belongs to simpler classes of automata, such as
  \emph{weak}, or \emph{terminal}
  automata~\cite{bloem.99.cav,cerna.03.mfcs,renault.13.tacas}.
\item[stutter-insensitive] For concurrent systems, many partial-order
  reductions (POR)
  techniques~\cite{peled.94.cav,valmari.93.cav,godefroid.96.phdlncs}
  and structural reductions~\cite{haddad.06.ppl,YTM20} can be used
  when it is known that $B$ is stutter-insensitive.
\item[fewer atomic proposition checks] Reducing the number of atomic
  propositions and the syntactic complexity of the
  formulas labeling the edges of $B$ can reduce the time required to
  build $S\otimes B$ in explicit model
  checking~\cite{blahoudek.15.spin}, and reducing the set of observed
  propositions also helps the aforementioned POR based techniques.
\end{description}

The techniques we will propose mainly attempt to reduce the size of
the automata, their number of atomic propositions, and attempt to make
them stutter-insensitive.  Any determinism improvement or strength
reduction is a welcome side effect.

\section{Simplifying Automata ``Given That...''}\label{sec:aut}

We now turn the language bounds of Section~\ref{sec:lang} into
automata constructions.  We use a variant of Büchi automata called
\emph{transition-based generalized Büchi automata} (TGBA).  This
variant uses accepting transitions instead of accepting states.
Additionally, the acceptance condition is generalized: a run has to
visit \emph{multiple} accepting sets of transitions infinitely often.
This variant is particularly compact to express weak fairness
conditions~\cite{couvreur.99.fm}, and it also makes our subsequent
definitions easier without loss of generality.

\subsection{Definitions}\label{sec:def}

The following definitions are freely adapted from the literature.

Let $\B=\{\bot,\top\}$ represent the Boolean set, and let $\AP$
represent a set of Boolean atomic propositions.  A valuation $\ell$ is
a function from $\AP$ to $\B$.  The set of valuations is denoted
$\B^\AP$.  The set of Boolean formulas over $\AP$ is denoted
$\B(\AP)$.  In the following, we consider words that are infinite
sequences of valuations, therefore the alphabet $\Sigma$ of
Section~\ref{sec:lang} is $\Sigma=\B^\AP$.  For an atomic proposition
$a\in\AP$ we use $\bar a$ or $\lnot a$ interchangeably to represent
its negation.

\begin{definition}[TGBA]
  A Transition-based Generalized Büchi Automaton (TGBA), is a structure
  $A=\langle \AP, Q, \init, \Acc, \delta\rangle$ where
  \begin{itemize}
  \item $\AP$ is a finite set of Boolean atomic propositions,
  \item $Q$ is a finite set of states,
    \item $\init\in Q$ is the initial state,
    \item $\Acc$ is a finite set of acceptance marks (denoted \emph{\tacc0}, \emph{\tacc1}, \emph{\tacc2}, etc.)
    \item $\delta\subseteq Q\times \B(\AP)\times 2^\Acc \times Q$ is the transition relation
      where we use $t=q\xrightarrow{f,a}q'$ to denote an element $t \in \delta$, $f$ is a Boolean formula over $\AP$ that we call the \emph{label} of the transition and $a$ is a set of acceptance marks.
  \end{itemize}

  A \emph{run} of $A$ on an infinite word
  $w=\ell_1\ell_2\ell_3\ldots\in(\B^\AP)^\omega$ is an infinite
  sequence of connected transitions
  $\rho=q_1\xrightarrow{f_1,a_1}q_2\xrightarrow{f_2,a_2}q_3\xrightarrow{f_3,a_3}q_4\ldots\in
  \delta^\omega$ such that $q_1=\init$ and for all $i$,
  $\ell_i\Rightarrow f_i$.  (Recall that $\ell_i$ is a valuation of
  all atomic propositions, therefore a conjunction of atomic
  propositions, in negative or positive form, but $f_i$ is a Boolean
  formula.)
  A run is \emph{accepting} iff for each mark $m\in \Acc$ there are
  infinitely many $i$ such that $m\in a_i$.

  The \emph{language} of $A$, denoted $\lang(A)$, is the set of
  all infinite words $w$ such that there exists an accepting run of
  $A$ on $w$.

\end{definition}

\begin{theorem}[TGBA for a formula~\cite{couvreur.99.fm,gastin.01.cav,giannakopoulou.02.forte}]
  Given an LTL formula $\varphi$ over $\AP$, one can build a TGBA $A_\varphi$
  with $O(2^{|\varphi|})$ states such that $\lang(A_\varphi)=\lang(\varphi)$.
\end{theorem}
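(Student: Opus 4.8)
The plan is to use the classical tableau construction based on the Fischer--Ladner closure, in the transition-labeled form that directly matches our definition of TGBA (this is essentially Couvreur's construction). First I would put $\varphi$ in negation normal form, so that negations occur only in front of atomic propositions and the only temporal operators are $\X$, $\U$, and its dual $\mathbin{\mathsf{R}}$ (release); this preserves the language and at most doubles the size. Next I would let the closure $\cl(\varphi)$ be the set of subformulas of the NNF formula, so $|\cl(\varphi)| = O(|\varphi|)$. The states of $A_\varphi$ are the subsets $S \subseteq \cl(\varphi)$, read as ``the conjunction of the formulas in $S$ must hold from the current position on''; there are at most $2^{|\cl(\varphi)|} = 2^{O(|\varphi|)}$ of them, which yields the announced bound, and the initial state is $\init = \{\varphi\}$.

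For the transitions I would repeatedly apply the expansion laws $\psi_1 \U \psi_2 \equiv \psi_2 \lor (\psi_1 \land \X(\psi_1\U\psi_2))$ and $\psi_1 \mathbin{\mathsf{R}} \psi_2 \equiv \psi_2 \land (\psi_1 \lor \X(\psi_1 \mathbin{\mathsf{R}} \psi_2))$ to the formulas of a state $S$, together with Boolean simplification, until $\bigwedge S$ is rewritten into a disjunction of terms of the form $\bigl(\ell \land \X(\bigwedge S')\bigr)$ with $\ell \in \B(\AP)$ a Boolean formula over the atomic propositions and $S' \subseteq \cl(\varphi)$. Each such term contributes one transition $S \xrightarrow{\ell} S'$. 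For the acceptance condition I would take one mark $m_{\chi} \in \Acc$ for every until-subformula $\chi = \psi_1 \U \psi_2 \in \cl(\varphi)$, and place $m_\chi$ on a transition produced from a term exactly when that term does not ``postpone'' $\chi$, i.e.\ when $\chi \notin S$ or the term was obtained by choosing the $\psi_2$ disjunct in the expansion of $\chi$. The generalized Büchi condition then says that no eventuality is deferred forever.

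It then remains to prove $\lang(A_\varphi) = \lang(\varphi)$ by the two usual inclusions. For $\lang(\varphi) \subseteq \lang(A_\varphi)$, given $w \models \varphi$ I would exhibit the canonical run whose $i$-th state is $\{\psi \in \cl(\varphi) : w^i \models \psi\}$ (with $w^i$ the $i$-th suffix of $w$); the expansion laws guarantee that consecutive states are linked by a legal transition whose label is satisfied by the $i$-th letter, and the semantics of $\U$ guarantees that whenever $\psi_1\U\psi_2$ belongs to a state it is fulfilled finitely later, so every mark $m_\chi$ is seen infinitely often. For $\lang(A_\varphi)\subseteq\lang(\varphi)$, given an accepting run $\rho$ on $w$ I would prove by induction on the structure of $\psi \in \cl(\varphi)$ that if $\psi$ occurs in the state at position $i$ of $\rho$ then $w^i \models \psi$; the only delicate case is $\psi = \psi_1 \U \psi_2$, where the expansion-law transitions force a chain $\psi_1 \U \psi_2 \in S_i, S_{i+1}, \dots$ with $\psi_1$ true all along until $\psi_2$ becomes true, and the fairness requirement $m_\psi \in a_j$ for infinitely many $j$ forbids this chain from being infinite.

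The main obstacle is this soundness argument: one must check that the marking is placed precisely so that ``$m_\chi$ infinitely often'' is equivalent to ``the eventuality $\chi$ is never postponed indefinitely'', and that the Boolean simplification during the rewriting step never merges or drops a term in a way that loses this correspondence or lets a reachable state escape $\cl(\varphi)$. Everything else (NNF, the state bound, and the completeness direction) is routine. An alternative route that externalizes this bookkeeping is to first build, in linear size and by structural induction, a very weak alternating Büchi automaton for $\varphi$, and then dealternate it into a TGBA; there the $2^{O(|\varphi|)}$ state bound and the generalized acceptance fall out of the standard dealternation of a ``very weak'' automaton.
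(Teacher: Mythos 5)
Your sketch is the classical closure/tableau construction with one acceptance mark per Until-subformula, which is exactly the construction of the works the paper cites (Couvreur; Gastin--Oddoux; Giannakopoulou--Lerda) --- the paper itself gives no proof and simply defers to these references. So your approach matches the intended one, and the outline (including the delicate soundness case for $\U$) is correct.
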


For instance the leftmost automaton of
Figure~\ref{fig:minato}~(page~\pageref{fig:minato}) is a TGBA for
$\F(p\land r)\lor \G((\F q)\lor (\F\bar q))$.  An accepting run has to
encounter marks \tacc{0} and \tacc{1} infinitely often.  Therefore, any
run reaching state 1 is accepting, and any run
reaching state 2 is accepting if both $q$ and $\bar q$ hold
infinitely often.  Note that $A_\varphi$ is not unique.  There is a
vast literature on techniques for building and simplifying
automata~\cite[...]{etessami.00.concur,somenzi.00.cav,etessami.01.alp,fritz.03.ciaa,babiak.13.spin,duret.14.ijccbs}.

An obvious optimization is to discard the useless parts of the automaton by trimming it.
%
  The \emph{trim} of an automaton $A$, denoted $\Trim(A)$, is the
  restriction of $A$ to the transitions and states that appear in at
  least one accepting run of $A$.  Doing so preserves the language of $A$.
  This operation can be done in linear time by studying the strongly
  connected components of the automaton~\cite{etessami.00.concur}.
%

The intersection of the languages of two automata $A_1$ and $A_2$ is
represented by a product $A_1\otimes A_2$ such that
$\lang(A_1\otimes A_2)=\lang(A_1)\cap\lang(A_2)$.

\begin{definition}[Product of TGBA]
  Given two automata $A_1=\langle \AP_1, Q_1, \init_1, \Acc_1, \delta_1\rangle$
  and $A_2=\langle \AP_2, Q_2, \init_2, \Acc_2, \delta_2\rangle$,
  where $\Acc_1\cap\Acc_2 = \emptyset$,
  the product $A_1\otimes A_2$ is the automaton
  $\langle \AP, Q, \init, \Acc, \delta \rangle$ where:
  \vspace*{-1ex}
  \begin{itemize}
  \item $\AP = \AP_1 \cup \AP_2$
  \item $Q = Q_1\times Q_2$
  \item $\init = (\init_1,\init_2)$
  \item $\Acc = \Acc_1 \cup \Acc_2$.
  \item $\delta = \left\{ (q_1,q_2)\xrightarrow{f_1\land f_2,a_1\cup a_2} (q'_1,q'_2) \;\middle|\; q_1\xrightarrow{f_1,a_1} q'_1 \in \delta_1, q_2\xrightarrow{f_2,a_2} q'_2\in \delta_2\right\}$
  \end{itemize}
\end{definition}

For instance Figure~\ref{fig:construct} shows in the bottom right the
product $A_{\lnot\varphi}\otimes A_K$ of the two surrounding
automata.  The transitions that would be removed by $\Trim$ are dashed.

One can also define the sum of two TGBA $A_1\oplus A_2$ such that
$\lang(A_1\oplus A_2)=\lang(A_1)\cup\lang(A_2)$, and the complement
$\overline{A}$ such that
$\lang(\overline{A})=(\B^\AP)^\omega\setminus \lang(A)$.  We omit the
precise definition of these operations.  While sum and product are
cheap operations (at most quadratic in the size of the automata), the
complement is worse than exponential~\cite{yan.08.lmcs,schewe.12.atva}
so it is often desirable to avoid it (e.g., we prefer to compute
$A_{\lnot\varphi}$ instead of $\overline{A_{\varphi}}$).

\section{Basic Strategies}
\label{sec:basic}

The simplest way to apply Theorem~\ref{th:rel} is to build automata
for the most restricted and the most relaxed languages pictured in Figure~\ref{fig:patate}.  Consider
the following two definitions:
\begin{align}
  \givenmin{A_{\lnot\varphi}}{K}&=A_{\lnot\varphi}\otimes A_K \\
  \givenmax{A_{\lnot\varphi}}{K}&=A_{\lnot\varphi}\oplus A_{\lnot K}
\end{align}

When $B$ is chosen as $\givenmin{A_{\lnot\varphi}}{K}$, we are using
the most restricted language of Figure~\ref{fig:restrict}.  If $B$ is
$\givenmax{A_{\lnot\varphi}}{K}$ we are using the most
relaxed language of Figure~\ref{fig:relax}.

Note that if $\lang(\givenmin{A_{\lnot\varphi}}{K})=\emptyset$, then
it follows from the definition that $\lang(K)\subseteq\lang(\varphi)$,
and since $\lang(S)\subseteq\lang(K)$ we have $S\models \varphi$.
Dually, if $\lang(\givenmax{A_{\lnot\varphi}}{K})=(\B^\AP)^\omega$,
then $\lang(K)\subseteq(\lnot\varphi)$, which means that
$S\models \lnot\varphi$ (i.e., every run of $S$ is a counterexample of
$\varphi$) and therefore $S\not\models \varphi$ (because $S$ is
nonempty).

While the emptiness check of a TGBA $\lang(A)=\emptyset$ can be
performed in linear time~\cite{couvreur.99.fm}, the universality test
$\lang(A)=(\B^\AP)^\omega$ requires exponential
time~\cite{fogarty.10.tacas}.  Fortunately the universality test
$\lang(\givenmax{A_{\lnot\varphi}}{K})=(\B^\AP)^\omega$ can be
avoided by replacing it with
$\lang(\givenmin{A_{\varphi}}{K})=\emptyset$ provided a formula for $\varphi$ is known.

Moreover, the automata products and sums in the above $\min_{|K}$ and $\max_{|K}$ constructions can also be replaced by logical
 operations on formulas before translating them to TGBA, as in $A_{\lnot\varphi\land K}$ and $A_{\lnot\varphi\lor \lnot K}$ respectively.

In the case where the $\min$ and $\max$ automata are neither empty nor
universal, their sizes are unlikely to be smaller than the original
$A_{\lnot\varphi}$.  In a way, using these automata for model checking
is similar to asking the model checker to prove $K$ in addition to
$\lnot\varphi$.  As stated in Section~\ref{sec:goals}, we would prefer
to select a $B$ that is ``simpler'' than $A_{\lnot\varphi}$.

The knowledge $K$ could contain atomic propositions that do not
appear in $\lnot\varphi$.  Let $P$ be the set of atomic propositions
that appear in $K$ but not in $\varphi$.
To avoid introducing needless atomic propositions in $P$, we suggest
to existentially quantify them.  This quantification can be done
precisely on the automaton $A_K$ by existentially quantifying $P$ from
all labels, or it can be \emph{over approximated} on the LTL formula
$K$ by quantifying $P$ from all its Boolean subformulas (considered
individually).  We note $QE(P,K)$ the latter operation.  To show that
this is an over-approximation, consider the unsatisfiable formula
$K=\X(a\land b)\land \X(\bar a\land b)$ and $P=\{a\}$.  We have
$(\exists a,\,a\land b) = b$ and $(\exists a,\,\bar a\land b) = b$,
therefore, $QE(P,K)=\X(b)\land \X(b)=\X(b)$ which is satisfiable.


Assuming $P$ contains the atomic propositions of $K$ that are not in
$\varphi$, let us introduce the following notations:\\[-2em]
\begin{align}
  \givenminqe{\lnot\varphi}{K}&=A_{(\lnot\varphi)\land QE(P,K)} \label{eq:minqe}\\
  \givenmaxqe{\lnot\varphi}{K}&=A_{(\lnot\varphi)\lor\lnot QE(P,K)}\label{eq:maxqe}
\end{align}


\section{Using Transition-Based Boolean Bounds on Labels}
\label{sec:bounds}

In this section, we investigate how to leverage
theorem~\ref{knowledge-theorem} so that given an automaton for a
knowledge $K$, we rewrite the automaton $A_{\lnot\varphi}$ into a
 simpler automaton $B$.

Simplicity here is measured syntactically on the automaton; we want an
automaton that has fewer states, fewer transitions, fewer atomic
propositions, fewer acceptance marks, and simpler (smaller) Boolean
formulas labeling the transitions of the automaton.

To achieve this, we propose to compute a set of Boolean bounds for
each transition of the automaton $A_{\lnot\varphi}$.  These bounds
enable more flexibility in the selection of transition labels by
providing the most restrictive and the most relaxed Boolean formulas
that can label each transition.

Minato's algorithm~\cite{minato.92.sasimi} is a recursive way to
rewrite a Boolean formula as a prime-irredundant cover, which is very
compact in general.  The algorithm works recursively using formulas in
three-valued logic, and Minato~\cite[Section 4.4]{minato.92.sasimi}
suggests an implementation of this algorithm using Binary Decision
Diagrams~\cite{bryant.86.tc} where a three-valued formula is simply
bounded using two Boolean functions:
$(f_{\mathit{low}},f_{\mathit{high}})$ and the algorithm generates an
irredundant sum-of-product $f'$ such that
$f_{\mathit{low}}\implies f' \implies f_{\mathit{high}}$.  In other
words, $f'$ is generated as a disjunction of conjunctions of literals,
such that no conjunct is uncessary, and no literal can be removed from
any conjunct.  We use this algorithm to simplify transition labels, as
it removes literals that are unnecessary to stay within those bounds.

In Sections~\ref{sec:bub} and \ref{sec:blb}, we introduce strategies
to compute Boolean lower and upper bounds for each label of the
automaton.  Then, in Section~\ref{sec:minato}, we show how to simplify
transition labels by using Minato's algorithm on the computed bounds.  This
approach preserves the transition structure of the automaton.  It can
sometimes remove transitions (if its label becomes $\bot$), it can
remove states (when they become unreachable), it can reduce the number
of atomic propositions used, and it generally simplifies the
expression of the labels.  So, contrary to the
$\givenminqe{\lnot\varphi}{K}$ and $\givenmaxqe{\lnot\varphi}{K}$
approaches presented in Section~\ref{sec:basic}, this approach always
produces a simpler automaton.

\subsection{Boolean upper bounds}\label{sec:bub}

The first step consists in realizing that since $S \models K$, in
\emph{every state} of $A_K$ we are over-approximating the state
the system $S$ might be in.  Some paths in $A_K$ might not be
realizable by $S$, but the system definitely cannot do anything
that $K$ does not allow.

We start by building the synchronized product
$A_{\lnot\varphi} \otimes A_K$ in which every state is a pair $(q,k)$.
We can then apply the $\Trim$ operation to discard any transition that
does not belong to an accepting Strongly Connected Component (SCC) or
to the prefix of one, and then discard any state of the product
unreachable from the initial state.

 Now consider for a given state $q$ of $A_{\lnot \varphi}$ the set of
 states $Q_q$ of the knowledge automaton $A_K$ in correspondence with $q$.
 The state of the system $S$ in this set of states can be
 over-approximated as the logical disjunction of the formulas labeling
 any transition that is outgoing from any state in $Q_q$.

\begin{definition}[Knowledge-based state guarantee]\label{def:sg}
  Given two automata
  $A_{\lnot\varphi}=\langle \AP, Q,\linebreak[2] \init, \Acc , \delta\rangle$, and
  $A_K=\langle \AP, Q_K, \init_K, \delta_K, \Acc_K\rangle$, let
  $\Trim(A_{\lnot \varphi}\otimes A_K)=\langle \AP, Q_P, \init_P,
  \linebreak[2] \Acc_P, \delta_P\rangle$ be the trim product of
  $A_{\lnot \varphi}$ and $A_K$.

  For any state $q\in Q$, let $Q_q\subseteq Q_K$ denotes the subset of
  states of $A_K$ that can synchronize with $q$ in the product:
  \[
    Q_q = \{k \in Q_K \mid (q,k) \in  Q_P \}
  \]

  From this set of states, we define the \emph{state guarantee} in state $q$ :
  \[
    \SG(q) = \bigvee_{k \in Q_q} \bigvee_{k\xrightarrow{f,a}k' \in \delta_K} f
  \]
  that represents the disjunction of all transition labels of $A_K$ that
  leave a state of $Q_q$.
\end{definition}

\tikzset{tg/.style={},
         sg/.style={}}

\begin{figure}[ptb]
  \begin{tikzpicture}[automaton,semithick,initial overlay]
    \matrix (m) [matrix of math nodes, nodes={state}, column sep=1.33cm, row sep=1.33cm]
      {
          &$q_1$&|[initial below]|$q_0$&$q_2$\\
        |[initial]|$k_0$&$q_1,k_0$&|[initial above]|$q_0,k_0$&|[dashed]|$q_2,k_0$\\
        $k_1$&$q_1,k_1$&$q_0,k_1$&|[dashed]|$k_2,q_1$\\
      };
      \draw[->] (m-1-3) edge node[above]{$a\land c$} (m-1-2)
                (m-1-3) edge node[above]{$\bar a\lor \bar c$} (m-1-4)
                (m-1-2) edge[loop above,looseness=11] node{$\top$} pic[pos=.2]{acc=0} pic[pos=.8]{acc=1} (m-1-2)
                (m-1-3) edge[loop above] node{$\bar a\lor \bar c$} (m-1-3)
                (m-1-4) edge[loop above] node{$b$} pic[pos=.3]{acc=0} (m-1-4)
                (m-1-4) edge[loop right] node{$\bar b$} pic[pos=.3]{acc=1} (m-1-4);
      \draw[->] (m-2-1) edge node[left]{$b\land c$} (m-3-1)
                (m-2-1) edge[loop above] node[above]{$c$} (m-2-1)
                (m-3-1) edge[loop below] node[below]{$b\land c$} pic[pos=.3]{acc=2} (m-3-1);
      \draw[->] (m-2-3) edge[sg] node[above]{$a\land c$} (m-2-2)
                (m-2-3) edge[dashed] node[above]{$\bar a\land c$} (m-2-4)
                (m-2-3) edge[sg,out=230,in=200,loop] node[pos=.65,left]{$\bar a\land c$} (m-2-3)
                (m-2-2) edge[loop above,looseness=11,tg] node[above=-2pt]{$c$} pic[pos=.2]{acc=0} pic[pos=.8]{acc=1} (m-2-2)
                (m-2-4) edge[loop above,dashed] node[above]{$b\land c$}  pic[pos=.3]{acc=0} (m-2-4)
                (m-2-4) edge[loop right,dashed] node[right]{$\bar b\land c$}  pic[pos=.3]{acc=1} (m-2-4)
                (m-3-2) edge[out=300,in=240,loop,looseness=6,tg] node[below=2pt]{$b\land c$} pic[pos=.1]{acc=0} pic[pos=.38]{acc=1} pic[pos=.7]{acc=2} (m-3-2)
                (m-3-3) edge[sg,loop below] node{$\bar a\land b\land c$} pic[pos=.3]{acc=2} (m-3-3)
                (m-3-4) edge[loop below,looseness=11,dashed] node{$b\land c$}  pic[pos=.2]{acc=0} pic[pos=.8]{acc=2} (m-3-4)
                (m-2-2) edge[tg] node[left]{$b\land c$} (m-3-2)
                (m-2-3) edge[sg] node[right]{$\bar a\land b\land c$} (m-3-3)
                (m-2-4) edge[dashed] node[right]{$b\land c$} (m-3-4)
                (m-3-3) edge[sg] node[pos=.45,above]{$a\land b\land c$} (m-3-2)
                (m-3-3) edge[dashed] node[above]{$\bar a\land b\land c$} (m-3-4)
                (m-2-3) edge[sg,out=250,in=45] node[above=-2pt,pos=.6,sloped]{$a\land b\land c$} (m-3-2)
                (m-2-3) edge[out=340,in=115,dashed] node[above=-2pt,pos=.55,sloped]{$\bar a\land b\land c$}(m-3-4)
                ;


   \node [left=1mm of m-1-2] {$(A_{\lnot\varphi})$};
   \node [left=5mm of m-2-1] {$(A_K)$};
   \node [right=1.5cm of m-2-4] {$(A_{\lnot\varphi}\otimes A_K)$};
  \end{tikzpicture}
  \caption[]{Example of product of $A_{\lnot\varphi}\otimes A_K$ for $\lnot\varphi = \F(a\land c)\lor \G((\F b)\land (\F\bar b))$ and $K=\F\G(b)\land \G(c)$.
    The dashed transitions are those removed by $\Trim$.  We have $\SG(q_0)=\SG(q_1)=(c)\lor(b\land c)\lor (b\land c)=c$ because these two states can be synchronized with all the states of $A_K$,
     therefore their state guarantee is the disjunction of all labels of $A_K$.
     This result indicates that when the system is synchronized with state $q_0$,
      it will always satisfy $c$.
       We have $\TG(q_1\xrightarrow{\top,\stacc0\stacc1}q_1) = (c)\lor(b\land c) = c$,
        which indicates that when a transition of the system is synchronized with this self-loop,
         it will always satisfy $c$.
         Finally, $\TG(q_0\xrightarrow{\bar a\lor\bar c,\emptyset}q_2)=\bot$ because the only transition synchronizing with this one was trimmed, showing that this transition is not needed.
    \label{fig:construct}}

  \bigskip
  \includegraphics[width=\textwidth]{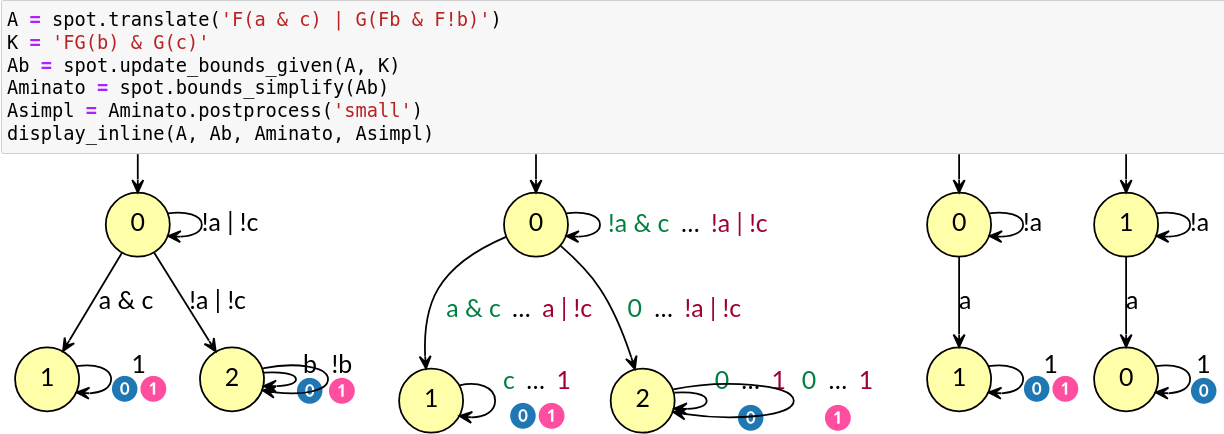}
  \caption{Use of Spot in a Jupyter notebook to integrate some
    knowledge $K=\F\G(b)\land \G(c)$ into the automaton for
    $\lnot\varphi = \F(a\land c)\lor \G((\F b)\land (\F\bar b))$.
    Note that this is the same example as Figure~\ref{fig:construct}
    where the construction of the bounds is explained in detail.
    $A_{\lnot\varphi}$ is on the left.
    The integration of knowledge
    $K$ is represented as an intermediate ``bounded automaton''
    in which each transition is bounded according to
    Theorem~\ref{th:reduc} (second automaton).  Applying Minato's algorithm gives
    the third automaton, which can be further simplified to the
    rightmost automaton reducing the problem to verification of $\F a$.\label{fig:minato}
    Note that Spot's notations differ slightly from those used in the
    paper, for instance \texttt{0}, \texttt{1}, and \texttt{!a|!c}
    stand for $\bot$, $\top$ and $\bar a\lor\bar c$ respectively.}
\end{figure}

In the trim product $\Trim(A_{\lnot \varphi}\otimes
A_K)$, consider a transition $(q,k)\xrightarrow{f\land f_k,a\cup
  a_k}(q',k')$ that was built as a product of
$q\xrightarrow{f,a}q'$ and $k\xrightarrow{f_k,a_k}k'$.
Then it is guaranteed that $f_k\Rightarrow
\SG(q)$ by construction.  Hence, when the component $A_{\lnot
  \varphi}$ of the product is known to be in $q$, this
$\SG(q)$ is an over approximation of the labels
$f_k$ that the transitions in component $A_K$ can satisfy.

Since $A_K$ overapproximates the system
$S$, it is also true that
$\SG(q)$ will overapproximate the behaviors of the states of
$S$ that can synchronize with
$q$.  This state guarantee formula thus provides an upper bound or
over approximation of the system state when reaching
$q$; therefore, for any transition $q\xrightarrow{f,a}q' \in
\delta$, \emph{relaxing} the transition label $f$ to accept $f \lor
\lnot{\SG(q)}$ would not modify the language of the product with the
 system $S$, since the system cannot satisfy
$\lnot{\SG(q)}$ in this state of the product.

Figure~\ref{fig:construct} shows examples of computation of $\SG$.

\subsection{Boolean lower bounds}\label{sec:blb}

Let us look at a way to restrict a transition label of $A_{\lnot\varphi}$
without limiting the ways in which the system can synchronize with
this transition.  For this purpose, we introduce $\TG(t)$ the
\emph{transition guarantee} of a transition $t=q\xrightarrow{f,a}q'$ of
$A_{\lnot\varphi}$, as the disjunction of all labels of transitions of $K$
that synchronize with $t$ in the trim product.

\begin{definition}[Knowledge-based transition guarantee]
  Using the same automata as in Definition~\ref{def:sg}, for any
  transition $t=q\xrightarrow{f,a}q'\in \delta$, we consider the set
  of formulas
  $K_t = \{f_k \mid (q,k)\xrightarrow{f\land f_k,a\cup a_k}(q',k') \in
  \delta_P \}$ that appear on transitions of $A_K$ that synchronize
  with $t$ in the product.
  From the disjunction of this set of formulas, we define the \emph{transition guarantee} for transition $t$:
  \[
    \TG(t) = \bigvee_{f_k\in K_t} f_k
  \]
\end{definition}

Intuitively the label
$f$ of $t$ can be restricted to $f'=f\land \TG(t)$ since this formula
is already enough to match all labels of transitions of $K$ that would synchronize
with $t$ in an accepted run.  Hence, labeling $t$ with $f'$ is also enough to
match all states of the system $S$ that would synchronize with $t$ with
its original label $f$.

Figure~\ref{fig:construct} shows examples of computation of $\TG$.

\subsection{Using the bounds}\label{sec:minato}

\begin{restatable}{theorem}{threduc}\label{th:reduc}
  Using the same automata as in Definition~\ref{def:sg}, consider
  a transition $t=q\xrightarrow{f,a}q'\in \delta$, and let
  $B=\langle \AP, Q, \init, \Acc, \delta\setminus\{t\}\cup\{t'\}\rangle$
be a copy of $A_{\lnot\varphi}$ where $t$ has been replaced by
  $
  t' = q\xrightarrow{f',a}q'
  $
  where $f'\in \B(\AP)$ is any formula such that
  \[
    \underbrace{f\land \TG(t)}_{\text{lower bound}}
      \quad\Rightarrow\quad f'\quad  \Rightarrow\quad
    \underbrace{f\lor \lnot{\SG(q)}}_{\text{upper bound}}
  \]
  Then $\lang(B\otimes A_K)=\lang(A_{\lnot \varphi}\otimes A_K)$
\end{restatable}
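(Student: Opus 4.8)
The plan is to establish the two inclusions $\lang(A_{\lnot\varphi}\otimes A_K)\subseteq\lang(B\otimes A_K)$ and $\lang(B\otimes A_K)\subseteq\lang(A_{\lnot\varphi}\otimes A_K)$ separately, exploiting that $A_{\lnot\varphi}\otimes A_K$ and $B\otimes A_K$ are the \emph{same} labelled graph except on the transitions descended from $t$. Precisely: every transition of $A_{\lnot\varphi}\otimes A_K$ not built from $t$ is also a transition of $B\otimes A_K$ and conversely; and for each $k\xrightarrow{f_k,a_k}k'\in\delta_K$ the transition $(q,k)\xrightarrow{f\land f_k,\,a\cup a_k}(q',k')$ of $A_{\lnot\varphi}\otimes A_K$ has a counterpart $(q,k)\xrightarrow{f'\land f_k,\,a\cup a_k}(q',k')$ in $B\otimes A_K$ with the same endpoints and the same acceptance marks, differing only by the label. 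Hence any run of one product induces, position by position, a candidate run of the other with the \emph{same} state sequence and the \emph{same} acceptance marks (so the same acceptance status), and all that remains is to check that the relabelled transitions stay compatible with the input word.

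For $\lang(A_{\lnot\varphi}\otimes A_K)\subseteq\lang(B\otimes A_K)$ I would use the \emph{lower} bound. Let $\rho$ be an accepting run of $A_{\lnot\varphi}\otimes A_K$ on $w=\ell_1\ell_2\cdots$. Since $\rho$ is accepting, every transition it fires appears in an accepting run and hence survives trimming, i.e.\ lies in $\delta_P$. At a step $i$ where $\rho$ fires a $t$-descendant built from $k\xrightarrow{f_k,a_k}k'$, that product transition is in $\delta_P$, so $f_k\in K_t$ and therefore $f_k\Rightarrow\TG(t)$; combined with $\ell_i\Rightarrow f\land f_k$ this gives $\ell_i\Rightarrow f\land\TG(t)\Rightarrow f'$, hence $\ell_i\Rightarrow f'\land f_k$. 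Replacing each such transition by its $t'$-counterpart thus yields an accepting run of $B\otimes A_K$ on $w$.

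For $\lang(B\otimes A_K)\subseteq\lang(A_{\lnot\varphi}\otimes A_K)$ I would use the \emph{upper} bound. Let $\rho'$ be an accepting run of $B\otimes A_K$ on $w$, and consider a step $i$ where it fires a $t'$-descendant out of a state $(q,k)$, built from $k\xrightarrow{f_k,a_k}k'$. The crucial point is that $k\in Q_q$: once this is known, $f_k$ is one of the disjuncts of $\SG(q)$, so $\ell_i\Rightarrow f_k$ gives $\ell_i\Rightarrow\SG(q)$, and since $\ell_i\Rightarrow f'\Rightarrow f\lor\lnot\SG(q)$ this forces $\ell_i\Rightarrow f$, hence $\ell_i\Rightarrow f\land f_k$; replacing each such transition by its $t$-counterpart then produces an accepting run of $A_{\lnot\varphi}\otimes A_K$ on $w$.

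I expect the step ``$k\in Q_q$'' to be the main obstacle — equivalently, one must show that every state $(q,k)$ visited by an accepting run of $B\otimes A_K$ also lies in $\Trim(A_{\lnot\varphi}\otimes A_K)$. The reverse inclusion of state sets (states of $\Trim(A_{\lnot\varphi}\otimes A_K)$ reappear in $\Trim(B\otimes A_K)$) is immediate from the construction used for the first inclusion, but this direction is delicate: relaxing $t$ to $f\lor\lnot\SG(q)$ enlarges the out-transitions of the states $(q,\cdot)$ and could, a priori, put a previously-trimmed product state back on an accepting run. The argument has to exclude this — presumably by noting that any newly enabled $t'$-descendant carries an $A_K$-label not implied by $\SG(q)$, which by the very definition of $\SG(q)$ can happen only out of a state $(q,k)$ with $k\notin Q_q$, and then unwinding this along the run — and getting that bookkeeping exactly right (or pinning down the precise condition on $A_K$, or on the scope of the rewriting, under which it holds) is where I would concentrate the effort.
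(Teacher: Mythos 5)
Your first inclusion is essentially the paper's $(\supseteq)$ argument: an accepting run of $A_{\lnot\varphi}\otimes A_K$ only uses transitions that survive trimming, so the $A_K$-label fired at a $t$-step is a disjunct of $\TG(t)$, and $\ell_i\Rightarrow f\land \TG(t)\Rightarrow f'$ lets you replay the run in $B\otimes A_K$ with the same states and marks. For the converse you also set up exactly the paper's argument (case split on the valuation via $f'\Rightarrow f\lor\lnot\SG(q)$), but you stop at the step you yourself flag: showing that every state $(q,k)$ visited by an accepting run of $B\otimes A_K$ satisfies $k\in Q_q$. That claim is the whole content of the paper's proof of this direction, so as submitted your proposal is an incomplete proof. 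The paper closes it by a forward induction along the run: the initial product state is reachable; if $(q_i,k_i)$ is a product state counted in $Q_{q_i}$, then $f_{k_i}$ is a disjunct of $\SG(q_i)$, so $\ell_i\Rightarrow\SG(q_i)$, and the upper bound then forces $\ell_i\Rightarrow f_i$; hence the original product transition $(q_i,k_i)\to(q_{i+1},k_{i+1})$ is compatible with $\ell_i$, the next state is reachable in turn, and the mirrored run of $A_{\lnot\varphi}\otimes A_K$ is accepting because the marks are untouched.

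Your hesitation about ``a previously-trimmed product state being put back on an accepting run'' is not mere bookkeeping, and you are right that the exact condition matters: the paper's induction re-establishes only \emph{reachability} in the untrimmed product, whereas $k\in Q_q$ requires $(q,k)$ to lie in $\Trim(A_{\lnot\varphi}\otimes A_K)$, i.e.\ on an accepting run of the \emph{original} product, a property of the run's future that a forward induction does not provide. The argument is airtight if $Q_q$ is computed from the product pruned only of unreachable states (then ``reachable'' is precisely the invariant needed, and your unwinding becomes the paper's induction). With $\Trim$ as literally defined, the scenario you fear can indeed occur when all product descendants of $t$ were removed for lack of an accepting continuation rather than for unreachability: in Figure~\ref{fig:construct}, $Q_{q_2}=\emptyset$, hence $\SG(q_2)=\bot$ and the $\bar b$-labelled self-loop $t$ on $q_2$ has lower bound $\bot$ and upper bound $\top$, so $f'=\top$ is admissible; but then $(q_2,k_1)$ acquires a usable self-loop carrying the previously missing acceptance mark, and $B\otimes A_K$ accepts $(\bar a\,b\,c)^\omega$, a word of $\lang(K)$ that is not in $\lang(A_{\lnot\varphi}\otimes A_K)$. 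So completing your proof requires either reading $Q_q$ over the reachable product (and then running the induction above), or otherwise constraining the admissible relabellings; in practice the implementation is unaffected because Minato's algorithm returns $\bot$ whenever the lower bound is $\bot$ and simply removes such transitions, but for arbitrary $f'$ within the stated bounds your missing step cannot be established without such an adjustment.
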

For size reasons, the proof is in Appendix~\ref{app:th:reduc}.

Our implementation of this construction is an extension of
Spot~\cite{duret.22.cav} in which the Boolean bounds of
Theorem~\ref{th:reduc} can be represented directly on the automaton.
Figure~\ref{fig:minato} shows our implementation at work.  Our
knowledge bound integration function
\texttt{spot.update\textunderscore{}bounds\textunderscore {}given} can
be called repeatedly to integrate multiple knowledge incrementally, as
we will discuss in Section~\ref{sec:inc}.

To select a simple label compatible with the bounds, we apply Minato's
algorithm~\cite{minato.92.sasimi} (introduced at the top of
Section~\ref{sec:bounds}) to compute a simpler label $f'$ such that
$f\land \TG(t)\implies f' \implies f\lor \lnot{\SG(q)}$.  Note that
when the lower bound is $\bot$, Minato's algorithm will always return
$f'=\bot$ (the transition can be removed), else if the upper bound is
$\top$, $f'=\top$ will be returned.

If $A$ and $K$ are defined over different sets of atomic propositions,
the result of Theorem~\ref{th:reduc} might include atomic propositions
from $K$ that were not in $A$, which is counterproductive.  In this
case, we simplify $K$ by existential quantification of the
propositions that are not in $A$ to produce an automaton $K_{QE}$.
The language of this automaton contains $\lang(K)$, therefore it also
contains $\lang(S)$ and it can still be used as a knowledge.  We denote
$\givenminato{A}{K}$ the ``Bounded by Minato'' automaton, i.e., the
automaton built from $A$ and $K$ by applying Minato's algorithm on the
Boolean bounds computed by Theorem~\ref{th:reduc} with existential
quantification of the atomic propositions not in $A$.

\section{Building Stutter-Insensitive Automata ``Given that...''}\label{sec:stuttu}


When model checking a concurrent system $S$ against a formula $\varphi$, several
advanced and very effective simplification techniques can be used when
it is known that $\lang(A_{\lnot\varphi})$ is stutter-insensitive with
reductions up to a factorial
factor~\cite{peled.94.cav,valmari.93.cav,godefroid.96.phdlncs,haddad.06.ppl,YTM20}.

In this section, we consider the case where the language of
$A_{\lnot\varphi}$ is \emph{stutter-sensitive}, and, given a knowledge
$K$, we want to replace $A_{\lnot\varphi}$ by an automaton $B$ whose
language is \emph{stutter-insensitive}.  Even if $B$ is ``bigger''
than $A_{\lnot\varphi}$, model checking might become more efficient
thanks to the aforementioned simplifications.

Given a word $w$, let $[w]$ be the set of stutter-equivalent words
that can be obtained from $w$ by finitely duplicating letters or removing
repetitions.\footnote{$[w]$ is an equivalence class
  for the $\sim^{\text{lim}}$ relation of Peled et al.~\cite[Lemma
  3]{peled.98.tcs}.}  A language $\lang(A)$ is stutter-sensitive iff
there exists at least one equivalence class $[w]$ that is \emph{only partly}
covered by $\lang(A)$, i.e., such that $[w]\cap \lang(A)\ne\emptyset$
and $[w]\cap \overline{\lang(A)}\ne\emptyset$.

\begin{figure}[tb]
  \includegraphics[width=\textwidth]{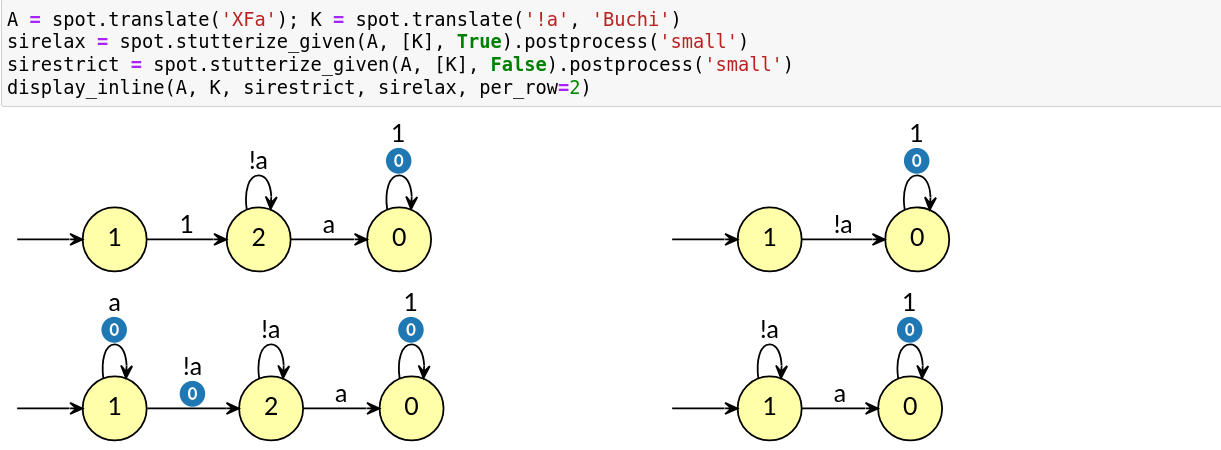}
  \caption{Use of Spot in a Jupyter notebook to integrate some trivial
    knowledge $K=\bar a$ (top right automaton) into the
    stutter-sensitive automaton for $\lnot\varphi = \X\F(a)$ (top-left
    automaton) and turn it into a stutter-insensitive automata.
    Simplified automata for $\givensirestrict{A}{K}$ and
    $\givensirelax{A}{K}$ are given on the bottom left and right
    respectively.\label{fig:stutter}}
\end{figure}

As an example, consider the automaton $A_{\X\F a}$ \emph{given that}
$K=\bar a$.  Figure~\ref{fig:stutter} shows this example using
automata. The knowledge provided here is very simple, but this kind of
effect could occur anywhere in the automaton, not just starting in the
initial state.  Automaton $A_{\X\F a}$ is stutter-sensitive: it
rejects the word $w=a \bar a \bar a\bar a \cdots$, but it accepts
words starting with more than one $a$.  However, notice $w$ is outside
$\lang(K)$, so by Theorem~\ref{th:rel}, $w$ could be added to the
language of $B$ to make it stutter-insensitive (now accepting $\F a$),
giving the bottom-right automaton of Figure~\ref{fig:stutter}.
Another strategy would be to remove all words in $[w]$ from $B$: since
all these words all start by $a$, the whole class $[w]$ is outside
$\lang(K)$.  This second option, corresponding to the formula
$\G(a) \lor \F(\bar a \land \F a)$ gives the bottom-left automaton of
Figure~\ref{fig:stutter}.  Note that on this example, using the
approach based on Minato's algorithm described in
Section~\ref{sec:minato} will only give us bounds $\bar a ... \top$ on
the first transition of $A_{\X\F a}$, but since this transition is
already labeled by $\top$ it would not be changed.

As seen in this example, we propose two strategies to turn a
stutter-sensitive automaton into a stutter-insensitive one.  For each
partly-covered equivalence class $[w]$, the relaxing strategy consists
in adding the rest of $[w]$ to $\lang(A)$.  Dually, the restricting
strategy consists in removing $[w]$ from $\lang(A)$.  This is
legitimated by Theorem~\ref{th:rel} provided that the added or removed
parts are outside the knowledge.

To realize these strategies on automata, let us equip ourselves with a
function $\si(A)$ that returns an automaton $A'$ such that
$\lang(A')$ is the smallest stutter-insensitive language that contains
$\lang(A)$.  Such an operation has already been defined for Büchi
automata~\cite{holzmann.96b.spin} or TGBA~\cite{michaud.15.spin}.
Intuitively, it consists in two simple syntactic transformations: adding
shortcut edges to reduce stutter, and adding states to allow
stuttering after traversing any transition.  The effect of this operation is that all classes
$[w]$ partly covered by $\lang(A)$ get fully included into
$\lang(\si(A))$.

Additionally, let us define $\ss(A)$ the
stutter-sensitive part of $A$ as the automaton that recognizes
only the words $w\in\lang(A)$ such that
$[w]\cap \overline{\lang(A)}\ne\emptyset$.  While, to our knowledge, this
operation does not exist in the literature, it can be defined using
$\si$, complement and product as follows:
\[
  \ss(A) = A\otimes\si(\si(A)\otimes{\overline{A}})
\]
In the above formula, $\si(A)\otimes{\overline{A}}$ accepts exactly the words that should be added to $\lang(A)$ to make it stutter-invariant.  Therefore,
$\si(\si(A)\otimes{\overline{A}})$ accepts all the words $w$ such that
$[w]$ is partly covered by $A$.

We now show how to realize our two strategies using these automata
operations.

\begin{theorem}[Stutter-Insensitive relaxation and restriction]
  \label{th:stutter}
  Let $A$ be a stutter-sensitive TGBA and $K$ be an LTL formula.  We
  define the SI-relaxation and SI-restriction of $A$ given $K$ as
  follows.
\begin{align*}
  \givensirelax{A}{K}&=\begin{cases}
    \si(A) & \text{if~} \lang(\si(A) \otimes \overline{A} \otimes A_K) = \emptyset\\
    A & \text{else}
  \end{cases}\\
  \givensirestrict{A}{K}&=\begin{cases}
                            A\otimes \overline{\ss(A)} & \text{if~} \lang(\ss(A) \otimes A_K) = \emptyset\\
    A & \text{else}
  \end{cases}
\end{align*}

 Then $\lang(\givensirelax{A}{K}\otimes A_K)=\lang(\givensirestrict{A}{K}\otimes A_K)=\lang(A\otimes A_K)$.
\end{theorem}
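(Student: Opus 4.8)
The plan is to treat the two cases of each definition separately and reduce everything to Theorem~\ref{th:rel}. For both $\givensirelax{A}{K}$ and $\givensirestrict{A}{K}$, the ``else'' branch is trivial since then $B=A$ and the claimed language equality is immediate. So the work is entirely in the ``if'' branches, where I must show that the syntactic modifications $\si(A)$ (for relaxation) and $A\otimes\overline{\ss(A)}$ (for restriction) land within the bounds permitted by Theorem~\ref{th:rel}, namely $\lang(\lnot\varphi)\cap\lang(K)\subseteq\lang(B)\subseteq\lang(\lnot\varphi)\cup\overline{\lang(K)}$ with $A$ playing the role of $A_{\lnot\varphi}$. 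Note that $\lang(B\otimes A_K)=\lang(A\otimes A_K)$ is equivalent to $\lang(B)\cap\lang(K)=\lang(A)\cap\lang(K)$, which is exactly what Theorem~\ref{th:rel} characterizes (with ``system'' replaced by ``knowledge''); so it suffices to prove the two inclusions $\lang(B)\cap\lang(K)\subseteq\lang(A)$ and $\lang(A)\cap\lang(K)\subseteq\lang(B)$.

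\textbf{Relaxation case.} Here $B=\si(A)$, so $\lang(A)\subseteq\lang(B)$ by definition of $\si$, giving the second inclusion immediately (even without $\lang(K)$). For the first inclusion, I must show $\lang(\si(A))\cap\lang(K)\subseteq\lang(A)$, i.e.\ that $\lang(\si(A))\setminus\lang(A)$ is disjoint from $\lang(K)$. Now $\lang(\si(A))\setminus\lang(A)=\lang(\si(A))\cap\overline{\lang(A)}=\lang(\si(A)\otimes\overline{A})$. The guard of the ``if'' branch says precisely $\lang(\si(A)\otimes\overline{A}\otimes A_K)=\emptyset$, i.e.\ $\lang(\si(A)\otimes\overline{A})\cap\lang(K)=\emptyset$. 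Done.

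\textbf{Restriction case.} Here $B=A\otimes\overline{\ss(A)}$, so $\lang(B)=\lang(A)\cap\overline{\lang(\ss(A))}\subseteq\lang(A)$, giving the first inclusion immediately. For the second inclusion I must show $\lang(A)\cap\lang(K)\subseteq\lang(B)=\lang(A)\cap\overline{\lang(\ss(A))}$, which reduces to $\lang(A)\cap\lang(K)\cap\lang(\ss(A))=\emptyset$; since $\lang(\ss(A))\subseteq\lang(A)$ by construction, this is just $\lang(\ss(A))\cap\lang(K)=\emptyset$, which is exactly the guard $\lang(\ss(A)\otimes A_K)=\emptyset$ of the ``if'' branch. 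Done.

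\textbf{Main obstacle.} The algebra above is routine once the reduction to language inclusions is in place; the only genuine content is justifying the two characterizations of $\si$ and $\ss$ used above --- namely that $\lang(\si(A)\otimes\overline{A})$ is exactly ``the words that must be added to $\lang(A)$ to make it stutter-insensitive'' and that $\lang(\ss(A))$ is exactly ``the words of $\lang(A)$ whose stutter-class is partly covered''. The first follows from the defining property of $\si$ (smallest stutter-insensitive overapproximation) plus the observation that a word $w\in\lang(\si(A))\setminus\lang(A)$ has $[w]$ partly covered, hence $[w]\subseteq\lang(\si(A))$; the second is essentially the definition of $\ss(A)$ given in the text, for which I would also double-check that the formula $\ss(A)=A\otimes\si(\si(A)\otimes\overline{A})$ indeed recognizes $\{w\in\lang(A)\mid[w]\cap\overline{\lang(A)}\neq\emptyset\}$ by unfolding: $\si(A)\otimes\overline{A}$ picks out the ``missing'' words, $\si(\cdot)$ of that saturates each such stutter-class, and intersecting back with $A$ keeps only the members of that class that lie in $\lang(A)$. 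I expect this bookkeeping about stutter-classes --- rather than any of the set-algebra --- to be the part that needs care.
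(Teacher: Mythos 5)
Your proposal is correct and follows essentially the same route as the paper's proof: both reduce the claim to the bounds of Theorem~\ref{th:rel} via the inclusions $\lang(\ss(A))\subseteq\lang(A)\subseteq\lang(\si(A))$, with the emptiness guards certifying that the words added by $\si$ (resp.\ removed via $\ss$) lie outside $\lang(K)$; you merely spell out the set algebra that the paper leaves implicit. The only minor remark is that your ``main obstacle'' is not actually needed for the stated language equality --- the argument uses only $\lang(A)\subseteq\lang(\si(A))$ and $\lang(\ss(A))\subseteq\lang(A)$, the exact stutter-class characterizations mattering only for the (unclaimed) stutter-insensitivity of the output.
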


\begin{proof}
The proof follows from Theorem~\ref{th:rel} and the fact that
$\lang(\ss(A))\subseteq \lang(A) \subseteq \lang(\si(A))$.  Indeed, the
above relaxation returns $\si(A)$ if and only if the words
added by $\si$ (i.e. $\lang(\si(A)\otimes \overline{A})$) are outside
$\lang(K)$ (i.e., $\si(A) \otimes \overline{A} \otimes A_K$ has an
empty language).  Similarly, the restriction removes from $\lang(A)$
the words of $\lang(\ss(A))$ if and only if they are all outside of
$\lang(K)$.  When the given knowledge does not allow adding or
removing those words, the original automaton is returned.\footnote{The
  user of these functions may therefore assume that the returned
  automaton is stutter-insensitive whenever it is different from the
  input.}
\qed
\end{proof}

Figure~\ref{fig:stutter} shows stutter-insensitive automata obtained
with these two constructions.

In practice the complement of $A$ (present in both strategies) can be
avoided when an LTL formula for $A$ is known.  However, the complement
of $\ss(A)$, needed only for $\givensirestrict{A}{K}$ can be rather costly,
especially considering the current definition of $\ss(A)$, which tends
to create large automata.  Fortunately, the latter complementation
need only be performed after it has been checked that the removed
words are not part of the knowledge.  Our hope is therefore that any
stutter-insensitive optimization performed by the model checker will
offset the costs incurred by the computation of $\givensirestrict{A}{K}$.

\section{Incremental Integration of Knowledge}\label{sec:inc}

In this section we show how to
integrate knowledge when we know multiple facts about the system.
We also discuss some strategies to obtain cheap knowledge tailored to help simplify a given property.

\subsection{Working with a Knowledge Base}\label{sec:incremental}

Previously, in Section~\ref{sec:aut}--\ref{sec:stuttu}, we discussed
how to simplify $A_{\lnot\varphi}$ given a single knowledge $K$.
We now assume that we have multiple knowledge facts $A_{K_1}$,
$A_{K_2}$, $\ldots$, $A_{K_n}$ about the system $S$ and discuss
strategies to integrate them all.

We could simplify $A_{\lnot\varphi}$ by applying
Theorems~\ref{th:reduc}--\ref{th:stutter} using
$A_K=A_{K_1}\otimes A_{K_2}\otimes \cdots \otimes A_{K_n}$.  However,
since the product of automata is quadratic in size, this automaton
$A_K$ might be very big.  Even the translation of the conjunction of all
$K_i$ at once $A_K=A_{K_1\land K_2\land \ldots \land K_n}$ might be a
large automaton.

In the following, we propose techniques to integrate knowledge
incrementally, even if this comes with a loss of precision.

For the Boolean Bounds, we suggest applying Theorem~\ref{th:reduc}
using one $A_{K_i}$ at a time, in a loop, and delay the choice of the
label (using Minato's algorithm) until the end of the loop.  In the
syntax of Figure~\ref{fig:minato} we do:
\begin{verbatim}
for k in list_of_facts:
   a = spot.update_bounds_given(a, k)
a_minato = spot.bounds_simplify(a)
\end{verbatim}
Here the operation
\texttt{spot.update\textunderscore{}bounds\textunderscore{}given(a,
  k)} is using the lower bounds of each transition of automaton
\texttt{a} when building the product in the definition of $\TG$ and
$\SG$.  In this loop, each call to
\texttt{spot.update\textunderscore{}bounds\textunderscore{}given} may only
restrict the lower bounds and relax the upper bounds of the
transitions of \texttt{a}.

The incremental construction of $\givensirestrict{A}{K}$ and $\givensirelax{A}{K}$ is handled differently.
Since the automata $\si(A)$ and $A\otimes \overline{\ss(A)}$
constructed by these techniques are independent of the knowledge that
allow to adopt them, we can stop as soon as we find a suitable
knowledge.  More formally:
\begin{align*}
  \givensirelax{A}{K_1,K_2,\ldots,K_n}&=\begin{cases}
    \si(A) & \text{if~} \exists i, \lang(\si(A) \otimes \overline{A} \otimes A_{K_i}) = \emptyset\\
    A & \text{else}
  \end{cases}\\
  \givensirestrict{A}{K_1,K_2,\ldots,K_n}&=\begin{cases}
                            A\otimes \overline{\ss(A)} & \text{if~} \exists i, \lang(\ss(A) \otimes A_{K_i}) = \emptyset\\
    A & \text{else}
  \end{cases}
\end{align*}
This generalization, which is what we implement, explains why
the single fact \texttt{k} was being passed in an array in Figure~\ref{fig:stutter}.
In the implementation the terms $\si(A)\otimes\overline{A}$ and $\ss(A)$ are
of course computed only once, and not for each $K_i$.

\subsection{Seeking Knowledge}
\label{sec:seek}

The strength of our approach is that it is agnostic to the source or
proof method of the knowledge.  Of course some facts might simply be
other formulas we have already proven, when we are dealing with a set
of specification formulas to check against a given system.

We now suggest ways to obtain some cheap knowledge about the system
$S$, tailored to fit the formula $\varphi$ that we intend to verify.
For instance, we can find some simple facts on $S$ using bounded
explorations (breadth-first search, bounded model
checking~\cite{biere.21.bmc}, \ldots), structural analysis of the
system, or a decision procedure for reachability\ldots and that can
help simplify $A_{\lnot\varphi}$.

Our implementation currently looks for various kinds of knowledge, using
simpler decision procedures than full LTL to prove them.
\begin{description}
\item[Initial state]  First, we can check the label of the initial
  state of the system, giving us a knowledge of the form
  $\ell \in 2^\AP$.  While this knowledge is very basic, it is
  free. Using the initial state of the model to simplify a property
  has already been proposed for CTL~\cite{Srba18}.
\item[First steps] Similarly, exploring the first steps of the system
  is cheap.  We compute the set of formulas labeling the transitions
  reachable in the first $n$ steps of $A_{\lnot\varphi}$, and check
  the first $n$ steps of $S$ to check if their values allow
  us to define a knowledge of the form $\X f$, $\X\X f$\ldots

  We limit our exploration to $n=2$ in our experiments.  We use a
  breadth-first search with some limits to avoid explosion on models
  with very large branching factors ($\geq 10^4$). We could also have
  used any technique based on bounded model checking relying on a SAT
  or SMT solver.
\item[Invariants] Proving some invariants of a system can be delegated
  to tools that are specialized in reachability analysis and are more
  effective at this task than LTL model checkers.

  We start by looking at the value of each atomic proposition of
  $\varphi$ in the initial state of $S$ and try to prove that this
  value never evolves using a reachability solver. We then try to
  evaluate compatibility of the atomic propositions, checking given
  two atomic propositions $a$ and $b$ whether all of $\bar a \bar b$,
  $\bar a b$, $a \bar b$ and $ab$ are possible.  For instance,
  $a=[x>2]$ and $b=[x>3]$ have a strong relationship.  Knowledge
  about the exclusions between APs was also used by Blahoudek et
  al.~\cite{blahoudek.15.spin}.  We use an SMT solver to check if some
  of these cases are impossible, not even looking at the system but simply
  at the atomic proposition definitions.  We finally also check if
  formulas labeling the transitions of $A_{\lnot\varphi}$ are
  invariants.  All of these strategies output knowledge of the form
  $\G f$.


\item[Convergent atomic propositions] In this approach we try to prove
  that a given atomic proposition $a$ will eventually converge,
  providing a knowledge of the form $\F (\G a \lor \G \bar a)$.

  We use a low complexity structural test based on an
  analysis of recurring behaviors (SCC in the state graph of the
  system). Any atomic proposition that only observes variables in the
  prefix of such SCC must converge; they cannot oscillate
  indefinitely.  
  We can 
  also, using an SMT solver, try to determine the polarity of
  atomic propositions at convergence, yielding knowledge of the
  form $\F\G a$ (or $\F\G \bar a$).
\end{description}

These strategies are all very basic currently, but show how we can
leverage a diversity of decision procedures (with lower complexity than full LTL model-checking)
 to populate a knowledge
base that is tailored for a given formula to assist an LTL model checking step.

\section{Experimental Study}\label{sec:bench}

Knowledge-based simplifications (``given that'') have been implemented
in Spot~2.13~\cite{duret.22.cav}.  The knowledge collection described
in Section~\ref{sec:seek} has been implemented in
ITS-tools~\cite{ITStools}, which won the LTL category of the Model
Checking Contest in 2023 for the first time, thanks in part to these
strategies. The tools to gather the knowledge and integrate it are
open source and publicly available.  A reproducibility package for the
experiments can be found at
\url{https://codeocean.com/capsule/1210152/tree/v1}.

During the competition, ITS-tools allots a small time slice to
incrementally collect and integrate knowledge.  After this time, it
runs a portfolio of model-checkers, including a symbolic
solution~\cite{ITStools} and LTSmin~\cite{kant.15.tacas} configured as
an explicit model checker with partial-order reductions.

Measurement of the entire model checking procedure would introduce
many biases due to the complex interactions of the portfolio
techniques with the main refinement loop of ITS-Tools~\cite{YTM20}.  Therefore, we focus our evaluation on the knowledge integration step of the procedure, and compare
the automata obtained using the strategies introduced in this
paper.

\subsection{Experimental Setup}

The following performance analysis is based on the models and formulas
of MCC'22~\cite{mcc:2022}.  The benchmark uses a total of $150$
different model families (coming from various domains) configured to
build $1617$ model instances (some models are scalable).

For each of these (colored) Petri net models, the benchmark contains
$32$ randomly generated LTL formulas providing a total of $51744$ LTL
formulas.

For each model instance, we first collected some knowledge using the
basic approach of section~\ref{sec:seek} using
ITS-Tools~\cite{ITStools}, setting a generous timeout of $15$ minutes
to collect it.  Obtaining knowledge is cheap (median $0.67$ minutes
and $75\%$ of cases below $4$ minutes) as it leverages low complexity structural and symbolic tests, and in the worst case reachability queries which are much simpler than full LTL. High time usage to collect this basic knowledge correlates with models where LTL model-checking (at least in the empty product case, with no counter-example) is typically prohibitively expensive (huge models with millions of elements), so that the effort is worth it.

After this processing, we obtain some knowledge for $1601$ model
instances (out of $1617$).  For each model instance the knowledge is
represented as a set of LTL assertions, for a total of $240345$ small
facts (roughly $150$ facts per model instance).  From the original set
of $51744$ LTL formulas we only retain $48975$ formulas that intersect
the gathered facts.


To add some diversity, we consider the above
$48975$ formulas and their negations for a total of
$97950$ formulas.  Note that verifying an LTL formula and its negation
are two independent chalenges: it can be the case that neither of
these formula is verified.


For each formula, we retain only the subset of available facts whose
alphabet intersects that of the formula in the
experiments.\footnote{This is not necessarily optimal.
  Consider $\varphi=\G\F a$ with alphabet $\{a\}$ and facts
  $k_1=\G (b\rightarrow\X a)$ and $k_2=\F\G b$, ignoring $k_2$ because
  its alphabet does not intersect $\varphi$'s is in fact a mistake ;
  however selecting too many facts can easily overload some approaches
  particularly those using the conjunction of known facts, and does
  not help our incremental approaches that consider each fact in
  isolation since they typically ignore atomic propositions not in
  $\varphi$.}

Our benchmark therefore contains $97950$ problems that consist in one
specification LTL formula accompanied by a set of knowledge facts
(on average $12.7$, median $9$ facts per formula).

We then proceed to apply each of our strategies to these problems to build an automaton and compute various metrics on its size.
The strategies we compare are the following.  A ``p.'' used as prefix
indicates a \emph{precise} construction that considers all facts
$K=\bigwedge_i K_i$ at once. While \emph{precise} variants can pay a significant cost to manipulate
the conjunction of known facts, they also benefit from a more precise
knowledge so that there is a trade-off between precise and incremental
approaches.

\begin{description}
\item[raw] is formula $\lnot\varphi$ translated to a TGBA, without any integration of knowledge
\item[p.min, p.max] are obtained by building $A_{\lnot\varphi\land K}$ and
  $A_{\lnot\varphi\lor\lnot K}$ as discussed in Section~\ref{sec:basic},
  where $K$ is the conjunction $K$ of all facts.
\item[p.min$\exists$, p.max$\exists$] are the variants with existential quantification shown in equations \eqref{eq:minqe}--\eqref{eq:maxqe} from Section~\ref{sec:basic}.
\item[p.BM, BM] use respectively the strategy $\givenminato{A}{K}$ presented in Section~\ref{sec:minato}, and the incremental strategy described in Section~\ref{sec:incremental}.
\item[p.SIrelax, p.SIrestrict, SIrelax, SIrestrict] are the strategies of Section~\ref{sec:stuttu}, and their incremental variants from Section~\ref{sec:incremental}
\end{description}
Finally, we also consider some combination of techniques.  For instance
``SIrelax+BM'' designates the incremental implementation of SIrelax followed
by the incremental implementation of BM.

When providing statistics about the automata produced by the above
variants, we always assume that those automata have been further
simplified using techniques implemented in Spot (notably, removing
useless states, useless acceptance marks, and using simulation-based
reductions to merge states and prune unnecessary
transitions~\cite{babiak.13.spin}).

The average runtime for solving a problem with any strategy is 35.6ms.
On the 97950 benchmark problems, the only strategies that exceed a
very generous timeout of 10 seconds are ``SIrestrict'' on 470
problems, and ``p.SIrestrict'' on 522 problems.  Those strategies are
occasionally very slow only because of the amount of automata
complementations they have to perform.  Overall the knowledge integration step is truly negligible before any test involving the actual system. If the knowledge gathering step only uses low complexity procedures or knowledge  simply consists of previously proven properties, knowledge integration scales exceptionally well to complex problems.

\subsection{Problems Reduced to Empty or Universal}\label{sec:univempty}

We first study the problems that could be fully solved given the
knowledge by reducing the automaton to an empty or universal one.  For
testing universality, we syntactically check if the resulting
automaton has been reduced to a single-state all-accepting automaton.

\begin{table}[tb]
  \caption{Amount of problems (out of 97950 composed of formulas and their negation) that could be shown to be empty or universal using the provided knowledge.\label{tab:univempty}}
  ~\hfill
\begin{tabular}{lrrr}
\toprule
  Strategy&Universal&Empty&Total\\
  \midrule
p.min&0&25508&25508\\
p.max&24453&0&24453\\
p.min$\exists$&0&25508&25508\\
p.max$\exists$&25508&0&25508\\
BM&23080&25095&48175\\
SIrelax&208&0&208\\
SIrestrict&0&219&219\\
  \bottomrule
\end{tabular}
  \hfill
\begin{tabular}{lrrr}
\toprule
  Strategy&Universal&Empty&Total\\
  \midrule
p.BM&23258&25508&48766\\
p.SIrelax&212&0&212\\
p.SIrestrict&0&223&223\\
SIrelax+BM&23286&25091&48377\\
BM+SIrelax&23164&25095&48259\\
p.SIrelax+p.BM&23708&25508&49216\\
  p.BM+p.SIrelax&23344&25508&48852\\
  \bottomrule
\end{tabular}
  \hfill~
\end{table}

Table~\ref{tab:univempty} presents those results.
While it is certainly due to the random nature of the formulas of
 the MCC, in total $51016/97950\approx 52\%$ of the formulas of
  the MCC benchmark we kept ($49\%$ of all formulas of the MCC)
   could be solved using only the basic approach to glean related knowledge presented in Section~\ref{sec:seek}, thus avoiding a full LTL model-checking procedure.

We can see that
``min$\exists$'' and ``p.min$\exists$'' are the most effective
strategies to find empty problems, on par with ``p.BM''.  Dually
``p.max$\exists$'' is the only most effective at deducing universal
problems.  The amount of problems reduced to empty by
``p.min$\exists$'' and to universal by ``p.max$\exists$'' are
identical as hoped, because our benchmark includes both formulas and their
negations. Strategy ``p.max'' is less effective than ``p.max$\exists$'' because it keeps atomic propositions that are not in $\varphi$.
Still, while they might produce a larger automaton as discussed in the next section if they can't solve the problem, it is important in a full decision approach involving some knowledge to first test ``p.min$\exists$'' and
``p.max$\exists$'' for full solutions.

Generally, strategies based only on restriction (resp. relaxation) can
prove only emptiness (resp. universality) and obtaining ``empty'' seems easier on this benchmark than obtaining ``universal'' perhaps due to our limited syntactic check for universality.

\subsection{Simplifying the Remaining Unsolved Formulas}

\begin{table}[tb]
  \caption{\label{tab:strategies}Comparison of the different
    strategies over 46934 problems that could not be already reduced
    to false or true by previous methods.  `raw' designates the
    original automata, for baseline. For each strategy we report
    various metrics of the produced automata: its number of states and
    transitions, $\sum|f|$ is the total size of all labels, SI
    (resp. det) shows the fraction of automata that were
    stutter-insensitive (resp. deterministic), $|AP|$ is the number of
    atomic proposition, Time reports the number of milliseconds needed
    by the strategy, and TO counts the number of timeouts ($>10$
    seconds).  Different statistics are provided for some
    measurements: `q95' denotes the 95\% quantile (i.e., 95\% of all
    values are below the indicated value), `geom' denotes the
    geometric mean.  Values within 2\% of the best (resp. worse) value
    of a column, `raw` excluded, are highlighted in
    \colorbox{yellow}{yellow} (resp. \colorbox{pink}{pink}).}
  \centering\resizebox{\linewidth}{!}{
\begin{tabular}{l>{}c>{}c>{}c>{}c>{}c>{}c>{}c>{}c>{}c>{}c>{}c>{}c>{}c>{}c>{}c}
\toprule
 & \multicolumn{4}{c}{States} & \multicolumn{4}{c}{Transitions} & $|AP|$ & $\sum|f|$ & SI & det & \multicolumn{2}{c}{Time (ms)} & \multicolumn{1}{c}{} \\ \cmidrule(lr){2-5}\cmidrule(lr){6-9}\cmidrule(lr){14-15}
Strategy  & q95 & max & mean & geom & q95 & max & mean & geom & mean & mean &  &  & mean & geom & \multicolumn{1}{c}{TO} \\
\midrule
raw  & \cellcolor{white}{$\phantom{0}8$} & \cellcolor{white}{$\phantom{000}73$} & \cellcolor{white}{$\phantom{0}3.71$} & \cellcolor{white}{$3.13$} & \cellcolor{white}{$18$} & \cellcolor{white}{$\phantom{0000}286$} & \cellcolor{white}{$\phantom{00}7.32$} & \cellcolor{white}{$\phantom{0}5.52$} & \cellcolor{white}{$2.14$} & \cellcolor{white}{$\phantom{0}21.76$} & \cellcolor{white}{49\%} & \cellcolor{white}{50\%} & \cellcolor{white}{$20.26$} & \cellcolor{white}{$19.95$} & \cellcolor{white}{$\phantom{00}0$ }\\
\midrule
p.min  & \cellcolor{white}{$\phantom{0}9$} & \cellcolor{white}{$\phantom{00}684$} & \cellcolor{white}{$\phantom{0}5.30$} & \cellcolor{white}{$4.79$} & \cellcolor{white}{$19$} & \cellcolor{white}{$\phantom{00}16834$} & \cellcolor{white}{$\phantom{00}8.71$} & \cellcolor{white}{$\phantom{0}6.71$} & \cellcolor{pink}{$3.35$} & \cellcolor{white}{$\phantom{0}48.36$} & \cellcolor{pink}{9\%} & \cellcolor{white}{50\%} & \cellcolor{white}{$55.04$} & \cellcolor{white}{$49.82$} & \cellcolor{yellow}{$\phantom{00}0$ }\\
p.max  & \cellcolor{white}{$10$} & \cellcolor{white}{$\phantom{000}76$} & \cellcolor{white}{$\phantom{0}6.30$} & \cellcolor{pink}{$5.90$} & \cellcolor{white}{$25$} & \cellcolor{white}{$\phantom{0000}295$} & \cellcolor{white}{$\phantom{0}12.78$} & \cellcolor{pink}{$11.23$} & \cellcolor{pink}{$3.35$} & \cellcolor{white}{$\phantom{0}58.15$} & \cellcolor{pink}{9\%} & \cellcolor{pink}{43\%} & \cellcolor{white}{$55.86$} & \cellcolor{white}{$50.27$} & \cellcolor{yellow}{$\phantom{00}0$ }\\
p.min$\exists$  & \cellcolor{white}{$\phantom{0}9$} & \cellcolor{white}{$\phantom{00}684$} & \cellcolor{white}{$\phantom{0}5.23$} & \cellcolor{white}{$4.67$} & \cellcolor{white}{$19$} & \cellcolor{white}{$\phantom{00}16834$} & \cellcolor{white}{$\phantom{00}8.60$} & \cellcolor{white}{$\phantom{0}6.55$} & \cellcolor{white}{$2.21$} & \cellcolor{white}{$\phantom{0}35.06$} & \cellcolor{white}{11\%} & \cellcolor{white}{50\%} & \cellcolor{white}{$54.18$} & \cellcolor{white}{$48.97$} & \cellcolor{yellow}{$\phantom{00}0$ }\\
p.max$\exists$  & \cellcolor{white}{$10$} & \cellcolor{white}{$\phantom{000}76$} & \cellcolor{white}{$\phantom{0}6.14$} & \cellcolor{white}{$5.73$} & \cellcolor{white}{$25$} & \cellcolor{white}{$\phantom{0000}295$} & \cellcolor{white}{$\phantom{0}12.37$} & \cellcolor{white}{$10.77$} & \cellcolor{white}{$2.21$} & \cellcolor{white}{$\phantom{0}42.54$} & \cellcolor{white}{11\%} & \cellcolor{white}{45\%} & \cellcolor{white}{$54.91$} & \cellcolor{white}{$49.39$} & \cellcolor{yellow}{$\phantom{00}0$ }\\
BM  & \cellcolor{yellow}{$\phantom{0}6$} & \cellcolor{yellow}{$\phantom{000}65$} & \cellcolor{white}{$\phantom{0}3.13$} & \cellcolor{white}{$2.68$} & \cellcolor{yellow}{$13$} & \cellcolor{yellow}{$\phantom{0000}286$} & \cellcolor{yellow}{$\phantom{00}5.42$} & \cellcolor{white}{$\phantom{0}4.24$} & \cellcolor{yellow}{$1.70$} & \cellcolor{yellow}{$\phantom{0}13.39$} & \cellcolor{white}{46\%} & \cellcolor{yellow}{59\%} & \cellcolor{yellow}{$41.46$} & \cellcolor{yellow}{$40.62$} & \cellcolor{yellow}{$\phantom{00}0$ }\\
SIrelax  & \cellcolor{white}{$\phantom{0}8$} & \cellcolor{white}{$\phantom{000}73$} & \cellcolor{white}{$\phantom{0}3.86$} & \cellcolor{white}{$3.16$} & \cellcolor{white}{$22$} & \cellcolor{yellow}{$\phantom{0000}286$} & \cellcolor{white}{$\phantom{00}8.07$} & \cellcolor{white}{$\phantom{0}5.82$} & \cellcolor{white}{$2.14$} & \cellcolor{white}{$\phantom{0}26.51$} & \cellcolor{white}{66\%} & \cellcolor{white}{49\%} & \cellcolor{yellow}{$42.19$} & \cellcolor{yellow}{$41.09$} & \cellcolor{yellow}{$\phantom{00}0$ }\\
SIrestrict  & \cellcolor{white}{$10$} & \cellcolor{white}{$19463$} & \cellcolor{white}{$\phantom{0}6.77$} & \cellcolor{white}{$3.23$} & \cellcolor{white}{$27$} & \cellcolor{white}{$\phantom{0}373093$} & \cellcolor{white}{$\phantom{0}52.25$} & \cellcolor{white}{$\phantom{0}5.96$} & \cellcolor{white}{$2.14$} & \cellcolor{white}{$335.58$} & \cellcolor{white}{67\%} & \cellcolor{white}{51\%} & \cellcolor{white}{$57.65$} & \cellcolor{white}{$44.28$} & \cellcolor{white}{$122$ }\\
p.BM  & \cellcolor{yellow}{$\phantom{0}6$} & \cellcolor{yellow}{$\phantom{000}65$} & \cellcolor{white}{$\phantom{0}3.13$} & \cellcolor{white}{$2.68$} & \cellcolor{yellow}{$13$} & \cellcolor{yellow}{$\phantom{0000}286$} & \cellcolor{yellow}{$\phantom{00}5.41$} & \cellcolor{white}{$\phantom{0}4.24$} & \cellcolor{yellow}{$1.69$} & \cellcolor{yellow}{$\phantom{0}13.35$} & \cellcolor{white}{46\%} & \cellcolor{yellow}{59\%} & \cellcolor{white}{$46.13$} & \cellcolor{white}{$43.50$} & \cellcolor{yellow}{$\phantom{00}0$ }\\
p.SIrelax  & \cellcolor{white}{$\phantom{0}9$} & \cellcolor{white}{$\phantom{000}73$} & \cellcolor{white}{$\phantom{0}3.93$} & \cellcolor{white}{$3.18$} & \cellcolor{white}{$24$} & \cellcolor{white}{$\phantom{0000}340$} & \cellcolor{white}{$\phantom{00}8.47$} & \cellcolor{white}{$\phantom{0}5.92$} & \cellcolor{white}{$2.14$} & \cellcolor{white}{$\phantom{0}29.15$} & \cellcolor{yellow}{70\%} & \cellcolor{white}{49\%} & \cellcolor{white}{$46.20$} & \cellcolor{white}{$43.71$} & \cellcolor{yellow}{$\phantom{00}0$ }\\
p.SIrestrict  & \cellcolor{pink}{$11$} & \cellcolor{pink}{$84249$} & \cellcolor{pink}{$10.59$} & \cellcolor{white}{$3.28$} & \cellcolor{pink}{$32$} & \cellcolor{pink}{$1252969$} & \cellcolor{pink}{$105.47$} & \cellcolor{white}{$\phantom{0}6.12$} & \cellcolor{white}{$2.14$} & \cellcolor{pink}{$677.23$} & \cellcolor{yellow}{70\%} & \cellcolor{white}{51\%} & \cellcolor{white}{$57.34$} & \cellcolor{white}{$45.53$} & \cellcolor{pink}{$130$ }\\
SIrelax+BM  & \cellcolor{yellow}{$\phantom{0}6$} & \cellcolor{yellow}{$\phantom{000}65$} & \cellcolor{yellow}{$\phantom{0}3.07$} & \cellcolor{yellow}{$2.62$} & \cellcolor{yellow}{$13$} & \cellcolor{yellow}{$\phantom{0000}286$} & \cellcolor{yellow}{$\phantom{00}5.38$} & \cellcolor{yellow}{$\phantom{0}4.18$} & \cellcolor{yellow}{$1.70$} & \cellcolor{yellow}{$\phantom{0}13.49$} & \cellcolor{white}{51\%} & \cellcolor{yellow}{59\%} & \cellcolor{white}{$63.37$} & \cellcolor{white}{$61.75$} & \cellcolor{yellow}{$\phantom{00}0$ }\\
BM+SIrelax  & \cellcolor{white}{$\phantom{0}7$} & \cellcolor{yellow}{$\phantom{000}65$} & \cellcolor{white}{$\phantom{0}3.19$} & \cellcolor{white}{$2.67$} & \cellcolor{white}{$15$} & \cellcolor{yellow}{$\phantom{0000}286$} & \cellcolor{white}{$\phantom{00}5.93$} & \cellcolor{white}{$\phantom{0}4.49$} & \cellcolor{yellow}{$1.70$} & \cellcolor{white}{$\phantom{0}16.21$} & \cellcolor{white}{67\%} & \cellcolor{yellow}{58\%} & \cellcolor{white}{$63.15$} & \cellcolor{white}{$61.46$} & \cellcolor{yellow}{$\phantom{00}0$ }\\
p.SIrelax+p.BM  & \cellcolor{yellow}{$\phantom{0}6$} & \cellcolor{yellow}{$\phantom{000}65$} & \cellcolor{yellow}{$\phantom{0}3.04$} & \cellcolor{yellow}{$2.59$} & \cellcolor{yellow}{$13$} & \cellcolor{yellow}{$\phantom{0000}286$} & \cellcolor{yellow}{$\phantom{00}5.34$} & \cellcolor{yellow}{$\phantom{0}4.15$} & \cellcolor{yellow}{$1.69$} & \cellcolor{yellow}{$\phantom{0}13.44$} & \cellcolor{white}{52\%} & \cellcolor{yellow}{59\%} & \cellcolor{pink}{$72.08$} & \cellcolor{pink}{$66.89$} & \cellcolor{yellow}{$\phantom{00}0$ }\\
p.BM+p.SIrelax  & \cellcolor{white}{$\phantom{0}7$} & \cellcolor{yellow}{$\phantom{000}65$} & \cellcolor{white}{$\phantom{0}3.16$} & \cellcolor{yellow}{$2.63$} & \cellcolor{white}{$16$} & \cellcolor{yellow}{$\phantom{0000}286$} & \cellcolor{white}{$\phantom{00}5.99$} & \cellcolor{white}{$\phantom{0}4.46$} & \cellcolor{yellow}{$1.69$} & \cellcolor{white}{$\phantom{0}16.83$} & \cellcolor{yellow}{70\%} & \cellcolor{yellow}{58\%} & \cellcolor{pink}{$71.85$} & \cellcolor{pink}{$66.69$} & \cellcolor{yellow}{$\phantom{00}0$ }\\
\bottomrule
\end{tabular}}
\end{table}

We now study in Table~\ref{tab:strategies} statistics for all the 46934 problems
that could not be proven empty or universal by any strategy.

Since our goal is to reduce the size of the automaton, we first study
the number of states and transitions.  To better understand the
distribution of values, we present the 95\% quantile, as well as the
arithmetic and geometric means.  Cases where the arithmetic mean is
much larger than the geometric mean indicate the presence of a few
very large outliers.

We observe that basic strategies based on ``p.min'' or ``p.max'' are
not very good, as feared, doubling the average size.  However, all
strategies involving ``BM'' perform well: the average number of state
is reduced by 15\%, and transitions by 25\%.  The ``SIrelax'' strategy
produces a moderate size increase that can be further alleviated by
combining it with ``BM''.  However, ``SIrestrict'' can dramatically
increase the size of the automaton, and even time out in extreme
cases.

The number of atomic propositions (column $|AP|$) and size of the
formula labels ($\sum|f|$) is also significantly reduced by all
variants using ``BM''.  The average number of atomic propositions is
reduced from 2.14 to around 1.7 (a 21\% gain), and the average size of
formula labels goes from 21.76 to around 13.4 (a 38\% gain).

Concerning the stutter insensitivity (column ``SI'') of the resulting
automaton, only $49\%$ of the ``raw'' problems are
stutter-insensitive. ``min'' and ``max'' degrade this number
significantly. Both of the strategies ``SIrestrict'' and ``SIrelax''
developed to optimize this metric are indeed effective (but
``SIrestrict'' is more expensive and liable to timeout). Combined
strategies that finish with a ``SIrelax'' step lead to the best
results, being both small and stutter-insensitive in $70\%$ of cases.
The precise variants are a bit better than the incremental
constructions.

While our algorithm is not looking to improve determinism (column
``det''), this characteristic is nonetheless improved by all variants
involving ``BM''.  This is a welcome side-effect since having small
\emph{and} deterministic automata can only help model
checking~\cite{sebastiani.03.charme,blahoudek.14.spin}.

On this subset of 46934 cases, the average time to solve a problem is
up to two times higher than the average of the 97950 benchmark
problems (which was 35.6ms as mentioned earlier).  However it is still
very cheap.  The fastest strategies to integrate knowledge is ``BM''
(with an average of $41$ms). ``p.'' precise variants all pay a
reasonable time penalty, but the improvement in size is very modest.
The only timeouts we observe on these problems that cannot be entirely
solved are for ``SIrestrict'' and its precise variant (in less than
$3$\textperthousand~cases).

In conclusion, combined strategies using ``SIrelax'' and ``BM''
produce the smallest automata without real drawbacks apart from
moderate increase of the run time.  Given the very reasonable run
times, it is even feasible to run several of these strategies and then
select the most appropriate automaton on a case by case basis.

\section{Related Work}

Theorem~\ref{th:rel} proposes an original framework for exploiting
prior knowledge in LTL model checking.  This generalizes approaches
that only consider invariants~\cite{Srba18} or
quasi-invariants~\cite{larraz.14.cav}.

Theorem~\ref{th:rel} is also related to the problem of language
separation: given two languages, a separator is a third language that
contains the first one and is disjoint from the second
one~\cite{place.16.lmcs}.  In our case, we are looking for an
automaton $B$ whose language separates $\lang(A)\cap\lang(K)$ (which
it should include) from $\lang(K)\setminus\lang(A)$ (which it should
not intersect).  However, the two languages to separate aren't
independent: $A$ is already known to be a separator, and we are trying
to find a simpler $B$ by simplifying $A$.

Blaoudek et~al.~\cite[Section~5]{blahoudek.15.spin} also consider a
simplification of labels leveraging Minato's algorithm as we did in
Section~\ref{sec:bounds}.  While it is limited to an invariant about
mutually exclusive propositions, it did prove to be an effective
simplification.  Our approach generalizes theirs: if the knowledge
encodes mutual exclusion of atomic propositions, we will generate the
same bounds, however we can handle arbitrary LTL knowledge, and we
take the structure of the automaton into account.

Using Minato's algorithm to find a simple $f'$ such that
$f_{low}\implies f' \implies f_{high}$ can be related to Coudert and
Madre's \texttt{restrict} and \texttt{constraint}
operators~\cite{coudert.90.iccad} that find $f'$ such that
$f\land c \implies f'\implies f\lor\lnot c$, where $c$ is a Boolean
formula. However, in our case, $f_{low}$ and $f_{high}$ are not
limited to this form.

The use of bounded automata in Section~\ref{sec:bounds} evokes the
notion of incompletely specified Mealy machines used in synthesis,
where ``don't care'' edges are leveraged to produce smaller
automata~\cite{paull.59.tec,abel.15.iccad,renkin.22.forte}.  The
bounded automata we propose can be used for bound-aware
simulation-based reductions~\cite{smolka.23.bt}; this could complement
our current approaches.

Dureja and Rozier~\cite{dureja.18.tacas} consider the problem of model
checking a single model against a large set of LTL formulas.  They
compute a matrix of implications between formulas $f_i\implies f_j$,
and they use previously proven formula $f_1$ to avoid model checking
of implied formulas $f_2$.
Such an implication test, between a previously proven formula $f_1$
(the knowledge) and an unproved formula $f_2$, is covered in our
approach since ``$f_2$ given $f_1$'' will be an empty automaton (see
Section~\ref{sec:univempty}).  However, we can also obtain a simpler
automaton for $f_2$ even in the absence of full implication.
Moreover, we suggest several approaches to leverage \emph{all} accumulated
knowledge incrementally.

Our definitions suggest explicit representation of automata, however
our approach can be used for symbolic model checking.  Instead of
using a direct symbolic encoding of Büchi
automaton~\cite{rozier.11.fm}, obtained directly from LTL, we can
encode the explicit automaton resulting from our knowledge
simplifications into a symbolic
representation~\cite{sebastiani.05.cav}.  In fact, ITS-tools
uses both a knowledge-based approach and a symbolic encoding.


Although this work is motivated by model checking, our techniques can
be used to optimize any inclusion check $\lang(A)\subseteq\lang(B)$.
E.g., in the traditional implementation based on a
complementation~\cite{tsai.14.lmcs} of $B$, any knowledge about $A$,
can be used to simplify $B$ before its complementation.

\section{Conclusion}

We have introduced new operations that help simplify the
model-checking of a new formula when we already possess some prior
knowledge on the system.  Our strategies are automata-based
operations, thus capturing any nature of LTL property or prior
knowledge.  The evaluation of our current implementation on a large
benchmark demonstrates the effectiveness of the approach.

Studying the problem of knowledge integration led us to the problem of
producing a (small) automaton given bounds on the language it
represents.  This challenging problem is new to our knowledge and
while we have proposed several strategies in this paper, there is a
lot of room for more research in this direction.  For instance the
strategies we presented in Section~\ref{sec:stuttu} to produce
stutter-insensitive automata currently do not take any advantage of
the Boolean bounds computed in Section~\ref{sec:bounds}.  Similarly,
those Boolean bounds could very likely be used for other kinds of
simplifications, such as bound-aware simulation-based
reductions~\cite{smolka.23.bt}.
The problem of seeking relevant knowledge by leveraging simpler decision procedures than full LTL is also an avenue for further exploration, paving the way to
strategies achieving an incremental verification process.






\bibliographystyle{splncs04}
\bibliography{biblio,mc}

\newpage

The following appendix is not meant to be part of the published paper
because of size restrictions.  It is included for the benefit of the
interested reviewers.

\appendix
\section{Proof of Theorem~\ref{th:reduc}}\label{app:th:reduc}

\threduc*

\begin{proof}
  As a preliminary, notice that since $B$ and $A_{\lnot\varphi}$ differ
  only by the labels of their transitions, but the product $B\otimes A_K$
  and $A_{\lnot \varphi}\otimes A_K$ have the same states.

  $(\supseteq)$  Consider a word
  $w=\ell_0\ell_1\ell_2\ldots\in\lang(A_{\lnot \varphi}\otimes A_K)$.
  There exists an accepting run
  $r=(q_1,q_{k1})\xrightarrow[t_1]{f_1\land f_{k1},a_1\cup a_{k1}}
  (q_2,q_{k2})\xrightarrow[t_2]{f_2\land f_{k2},a_2\cup a_{k2}}\cdots$
  of $A_{\lnot \varphi}\otimes A_K$ such that for all $i$, we have $\ell_i\implies f_i\land f_{ki}$.
  We named the transitions $t_1$, $t_2$, ... for later reference.
  By definition of the product, this run can be seen a the synchronization
  of two runs:
  $r_A=q_1\xrightarrow[t_{a1}]{f_1,a_1}q_2\xrightarrow[t_{a2}]{f_2,a_2}\cdots$ a run of $A_{\lnot\varphi}$ accepting $w$,
  and
  $r_K=q_{k1}\xrightarrow{f_{k1},a_{k1}} (q_{k2})\xrightarrow{f_{k2},a_{k2}}\cdots$ a run of
  $A_K$ accepting $w$ as well.

  Since $B$ has been constructed from $A_{\lnot\varphi}$ by just
  changing the labels of the edges to anything permitted by the
  theorem, $B$ necessarily contains a run
  $r_B=q_1\xrightarrow{f'_1,a_1}q_2\xrightarrow{f'_2,a_2}\cdots$ such
  that $f_i\land \TG(t_{ai}) \implies f_i'$ for each $i$.  This run $r_B$ is accepting because
  it sees the same acceptance marks as $r_A$.   We claim that $r_b$
  is an accepting run on $w$
  because it can be shown that $\ell_i\implies f_i\land \TG(t_{ai}) \implies f_i'$.

  Since it is already the case that $\ell_i\implies f_i$ for each $i$
  (since $r_A$ accepts $w$), we just have to prove that
  $\ell_i\implies\TG(t_{ai})$.  The transition $t_i$ (which was
  obtained by synchronizing $t_{ai}$ and $t_{ki}$) is part of the
  accepting run $r$, so it also belongs to
  $\Trim(A_{\lnot\varphi}\otimes A_K)$.  This means that $\TG(t_{ai})$
  contains at least $f_{ki}$ as a disjunct.  We can therefore say that
  $\ell_i\implies f_{ki}\implies \TG(t_{ai})$ for all $i$.
  Conclusion: $w$ is still accepted by $B$ and therefore by $B\otimes A_K$ as well.

  ($\subseteq$) Consider an accepted word
  $w=\ell_0\ell_1\ell_2\ldots\in\lang(B\otimes A_K)$.
  There exists an accepting run
  $r'=(q_1,q_{k1})\xrightarrow[t_1']{f_1'\land f_{k1},a_1\cup a_{k1}}
  (q_2,q_{k2})\xrightarrow[t_2']{f_2'\land f_{k2},a_2\cup a_{k2}}\cdots$
  of $B\otimes A_K$ such that for all $i$, we have $\ell_i\implies f_i'\land f_{ki}$.

  By definition of the product, this run can be seen a the
  synchronization of two runs:
  $r_B=q_1\xrightarrow[t_{b1}]{f_1',a_1}q_2\xrightarrow[t_{b2}]{f_2',a_2}\cdots$
  a run of $B$, and
  $r_K=q_{k1}\xrightarrow{f_{k1},a_{k1}}
  q_{k2}\xrightarrow{f_{k2},a_{k2}}\cdots$ a run of $A_K$, both
  accepting $w$.

  Because of the way $B$ has been constructed in the Theorem, we know
  that for each transition $t_{bi}$ there is a corresponding transition
  $q_i\xrightarrow{f_i,a_i}q_{i+1}$ in $A_{\lnot\varphi}$ such that
  $f_i'\implies f_i\lor\lnot\SG(q_i)$.  Therefore, since
  $\ell_i\implies f_i'$ we have
  $(\ell_i\implies f_i)\lor(\ell_i\implies \lnot\SG(q_i))$.
  Let us show that the latter clause is false, so that
  the former one needs to be true.

  Since $\SG(q_i)$ is the disjunction of all labels that $A_K$ could
  do when it is synchronized with $q_i$, let us \emph{assume} that
  $(q_i,q_{ki})$ is reachable in $\Trim(A_{\lnot\varphi}\otimes A_K)$.  Then
  the transition $q_{ki}\xrightarrow{f_{ki},a_{ki}}q_{k(i+1)}$ is
  considered when building the disjuncts of $\SG(q_i)$, so we have
  $\ell_i\implies f_{ki} \implies \SG(q_i)$.  This implies that
  $\ell_i \centernot\implies \lnot\SG(q_i)$, which, combined with the
  last equation of the previous paragraph implies that
  $\ell_i\implies f_i$.

  We conclude that if $(q_i,q_{ki})$ is reachable, then not only
  $\ell_i\implies f_i$, but also the transition
  $(q_i,q_{ki})\xrightarrow{f_i\land f_{ki},a_i\cup a_{ki}}
  (q_{i+1},q_{k(i+1)})$ exists in $A_{\lnot\varphi}\otimes A_K$ making
  $(q_{i+1},q_{k(i+1)})$ reachable in turn.

  Since the initial state is obviously reachable, the above
  reasoning allows us to inductively define a run
$r=(q_1,q_{k1})\xrightarrow{f_1\land f_{k1},a_1\cup a_{k1}}
(q_2,q_{k2})\xrightarrow{f_2\land f_{k2},a_2\cup a_{k2}}\cdots$
of $A_{\lnot\varphi}\otimes A_K$  that accepts $w$.
  \qed
\end{proof}

\end{document}